\newtheorem{definition}{Definition}
\newtheorem{assumption}{Assumption}
\newtheorem{theorem}{Theorem}
\newtheorem{remark}{Remark}
\newtheorem{lemma}{Lemma}
\newtheorem{corollary}{Corollary}
\def\BibTeX{{\rm B\kern-.05em{\sc i\kern-.025em b}\kern-.08em
    T\kern-.1667em\lower.7ex\hbox{E}\kern-.125emX}}
\begin{document}

\title{\LARGE \bf
	An Efficient Distributed Nash Equilibrium Seeking with Compressed and Event-triggered Communication}
\author{Xiaomeng Chen$^{1}$, \thanks{$^{1}$X. Chen, W. Huo and L. Shi are with the Department of Electronic and Computer Engineering, Hong Kong University of Science and Technology, Clear Water Bay, Kowloon, Hong Kong (email: \{xchendu,whuoaa,eesling\}@ust.hk).} Wei Huo$^{1}$, Yuchi Wu$^{2}$ \thanks{$^{3}$ Y. Wu is with the School of Mechatronic Engineering and Automation, Shanghai University, Shanghai, China. (email: wuyuchi1992@gmail.com).}, Subhrakanti Dey$^{3}$, \thanks{$^{3}$ S. Dey is with the Department of Electrical Engineering, Uppsala University, Box 65, 751 03 Uppsala, Sweden. (email: subhrakanti.dey@angstrom.uu.se).} and Ling Shi$^{1}$
}

\maketitle

\begin{abstract}
Distributed Nash equilibrium (NE) seeking problems for networked games have been widely investigated in recent years.  Despite the increasing attention, communication expenditure is becoming a major bottleneck for scaling up distributed approaches within limited communication bandwidth between agents.  To reduce communication cost, an efficient distributed NE seeking (ETC-DNES) algorithm is proposed to obtain an NE for games over directed graphs, where the communication efficiency is improved by event-triggered exchanges of compressed information among neighbors. ETC-DNES saves communication costs in both transmitted bits and rounds of communication. Furthermore, our method only requires the row-stochastic property of the adjacency matrix, unlike previous approaches that hinged on doubly-stochastic communication matrices. We provide convergence guarantees for ETC-DNES on games with restricted strongly monotone mappings and testify its efficiency with no sacrifice on the accuracy. The algorithm and analysis are extended to a compressed algorithm with stochastic event-triggered mechanism (SETC-DNES).  In SETC-DNES, we introduce a random variable in the triggering condition to further enhance algorithm efficiency. We demonstrate that SETC-DNES guarantees linear convergence to the  NE while achieving even greater reductions in communication costs compared to ETC-DNES. Finally, numerical simulations illustrate the effectiveness of the proposed algorithms. 
\end{abstract}

\begin{IEEEkeywords}
Nash equilibrium seeking, information compression, event-triggered communication,  distributed networks, non-cooperative games
\end{IEEEkeywords}

\section{INTRODUCTION}

\IEEEPARstart{N}{oncooperative} games have been studied extensively  primarily due to its wide applications  in a large variety of fields, such as congestion control in traffic networks \cite{ma2011decentralized}, charging/discharging of electric vechicles \cite{grammatico2017dynamic} and demand-side management in smart grids \cite{saad2012game}.
As an important issue, Nash equilibrium (NE) seeking has gained ever-increasing attention with the emergence of multi-agent systems. Conceptually, NE represents a proposed solution  in multi-player non-cooperative games, where selfish players aim to minimize their individual cost functions by making decisions based on the actions of others. 

A large body of works on NE seeking  have been reported in the recent literature (see e.g.\cite{yu2017distributed,belgioioso2018projected,shamma2005dynamic} and references therein). Most  of them assumed that each player has access to the decisions of all other players, necessitating a central coordinator to disseminate information across the network. However, this full-decision information setting can be impractical in certain scenarios \cite{ghaderi2014opinion}. Hence, distributed NE seeking algorithms have gained considerable interest recently, where each agent is required to transmit augmented states, which are combinations of real states and estimates at each iteration. This communication overhead can be particularly challenging in games with numerous agents or in systems with limited communication capacities, such as underwater vehicles and low-cost unmanned aerial vehicles. Therefore, reducing communication costs while ensuring convergence to NE in large-scale games is of paramount importance.

	To alleviate  the communication burden in distributed networks, a large number of communication-efficient approaches have emerged. The key idea is to lessen the amount of communication at each iteration. For example, a compressor is usually adopted for agents to send compressed information that is encoded with fewer number of bits. Recently, various  information compression methods, such as quantization and sparsification, have been utilized in decentralized networks for distributed optimization and learning \cite{liao2022compressed,kovalev2021linearly,liu2021linear,beznosikov2020biased,zhang2023innovation}. In the context of  non-cooperative games,  Nekouei et al. \cite{nekouei2016performance}   investigated the impact of quantized communication on the behavior of NE seeking algorithms over fully connected communication graphs. A continuous-time distributed NE seeking algorithm with finite communication bandwidth is proposed by Chen et al. \cite{chen2022distributed}, where a specific quantizer is used. In our previous work, we  developed a discrete-time distributed NE seeking algorithm (C-DNES) that employed general compressors to compress information before transmission \cite{chen2022linear}. 
	
	While compression effectively reduces the amount of information transmitted, it does not  reduce the number of communication rounds. In some instances, the transmitted state may exhibit minimal changes between iterations, resulting in  a waste of communication resources. To address this issue, event-triggered mechanisms have been introduced, drawing inspiration from distributed control \cite{he2019adaptive, ge2021dynamic} and extending to distributed optimization \cite{liu2019distributed, cao2020decentralized}. The core idea behind event-triggered scheme is that information is transmitted only when an event occurs. {\color{black}As for distributed games,  Shi et al. \cite{shi2019distributed} introduced an edge-based event-triggered law for undirected graphs, while Zhang et al. \cite{zhang2021distributed} successfully implemented dynamic event-triggered mechanisms in directed networks to enhance communication efficiency. Furthermore, Xu et al. \cite{xu2021event} developed an event-triggered generalized NE seeking algorithm for distributed games with coupling constraints.} While previous works primarily focused on deterministic event-triggered schemes, recent research has demonstrated the efficiency of stochastic event-triggered mechanisms in reducing communication rounds for distributed systems \cite{han2015stochastic}.  {\color{black}Building on this, Tsang et al. \cite{tsang2022stochastic} extended deterministic event triggers to stochastic versions for distributed optimization problems, and Huo et al. \cite{huo2023distributed}  applied the stochastic event-triggered method to distributed non-cooperative games in continuous time, proving convergence to the exact NE in a mean square sense.} However, these methodologies do not address the data size transmitted at each iteration. {\color{black} Recently, Singh et al. \cite{singh2022sparq}  devised a communication-efficient algorithm that merged information compression with event-triggered communication for distributed optimization problems. This approach, however, faces distinct challenges when adapted to the context of distributed non-cooperative games, where each player's decision critically depends on the actions of others. This dependency necessitates not only the ongoing estimation of other players' actions but also the application of complex game theory techniques such as the variational inequality approach to reach to the NE.

Inspired by the aforementioned discussions, we  develop a novel communication-efficient NE seeking algorithm for distributed non-cooperative games over unbalanced directed graphs, which innovatively combines compression with event-triggered communication. Traditional event-triggered mechanisms based on local variables \cite{shi2019distributed, zhang2021distributed, xu2021event} do not account for compression errors, which can severely degrade performance. Our approach introduces novel deterministic and stochastic event-triggered mechanisms where the triggering conditions are intelligently designed around the compressed values.  Furthermore, we employ constant stepsizes to accelerate convergence. This choice, while enhancing convergence speed, introduces additional complexities in the algorithm's design and analysis due to the persistent nature of compression and event-triggered errors, which, if not properly managed, could lead to divergence. To ensure accurate convergence under constant stepsizes, we adopt the difference compression framework from our previous work \cite{chen2022linear} and meticulously design the triggering conditions to establish linear convergence.}

The main contributions of this paper are summarized as follows:

\begin{enumerate}

\item We  combine the compression approach with    event-triggered  strategy to design  a  compressed event-triggered distributed NE seeking algorithm, ETC-DNES, where agents initiate a compressed information transmission controlled by a locally  triggering event.
Compared with the compression approach in \cite{chen2022distributed,chen2022linear} and the event-triggered approach in \cite{shi2019distributed,zhang2021distributed,xu2021event}, our proposed algorithm can 
 save communication bits and rounds simultaneously, which further reduce the communication cost. 
\item We analyze the  convergence  performance of the proposed   algorithm for  games with restricted strongly monotone mappings. The results show that  ETC-DNES   achieves exact  convergence asymptotically under the diminishing triggering  condition sequence (\textbf{Theorem 1}). We also prove that when the triggering sequence is bounded by an exponential decay function,  linear convergence is established (\textbf{Corollary 1}).
\item We further extend the event-triggered mechanism  to a stochastic one, where a random variable is introduced, and propose SETC-DENS to improve the algorithm efficiency. We show that it can converge to the  NE with linear convergent rate (\textbf{Theorem 2}).
\item    The proposed strategies accommodate a broad  class of compressors with bounded relative compression error, encompassing commonly used compressors including unbiased and biase but contractive compressors. Moreover, unlike previous works \cite{shi2019distributed,xu2021event,huo2023distributed} that require undirected communication graph,  our proposed strategy for communication-efficient NE seeking applies to both  undirected and directed  graphs. 
\item In our simulation examples, we investigate various options for the event-triggered conditions. The results demonstrate the remarkable efficiency of both ETC-DNES and SETC-DNES in  reducing the number of transmitted data. Notably, these algorithms achieve reductions of up to a hundredfold in the number of total transmitted bits. 

\end{enumerate}

{\color{black}Different from  the recent compressed and event-triggered algorithms  in \cite{singh2022sparq}, which focus on distributed optimization problems over  undirected graphs with doubly-stochastic communication matrices, this paper tackles distributed non-cooperative games over potentially unbalanced directed communication graphs. Additionally, our approach is compatible with a broader range of compressor models, supporting compression ratios greater than $1$, and achieves linear convergence, thereby ensuring faster convergence rates. We also introduce a stochastic event-triggered mechanism to further enhance communication efficiency.}

 \textit{Notations:}  $\bm{1}\in\mathbb{R}^n$  represents the column vector with each entry given by $1$. We denote $\bm{e}_i \in\mathbb{R}^n$ as the  unit vector which takes zero except for the $i$-th entry that equals to $1$. $\mathbb{R}_{++}$ denotes the set of all positive real numbers. The spectral radius of matrix $\mathbf{A}$ is denoted by $\rho(\mathbf{A})$. The smallest nonzero eigenvalue of a positive semidefinite matrix $ \mathbf{M} \succeq  \mathbf{0}$ is denoted as $\tilde \lambda_{\min}(\mathbf{M})$. Given a vector $\mathbf{x}$, we denote its $i$-th
element by $[\bm{x}]_{i}$. Given a matrix $\mathbf{A}$, we denote its
element in the $i$-th row and $j$-th column by $[\bm{A}]_{ij}$. Let sgn($\cdot$) and $|\cdot|$ be the element-wise sign function and absolute value function, respectively. We denote  $||\mathbf{x}||$ and $||\mathbf{X}||_{\text{F}}$ respectively as $l_2$ norm of vector $\mathbf{x}$ and the Frobenius norm of matrix $\mathbf{X}$. 
 We denote  by $\mathbf{A} \odot \mathbf{B}$, the Hadamard product of two matrices, $\mathbf{A}$ and $\mathbf{B}$. In addition, $\bm{x} \preceq \bm{y}$ is denoted as component-wise inequality between vectors $\bm{x}$ and $\bm{y}$.  For a matrix $\mathbf{A} \in \mathbb{R}^{n\times n}$, denote $\text{diag}(\mathbf{A})$ as its diagonal vector, i.e., $\text{diag}(\mathbf{A})=[[\bm{A}]_{11},\ldots, [\bm{A}]_{nn}]^\top$. For a vector $\mathbf{a} \in \mathbb{R}^{n}$, denote $\text{Diag}(\mathbf{a})$ as the diagonal matrix with the vector $\mathbf{a}$ on its diagonal. A matrix $\mathbf{A} \in \mathbb{R}^{n\times n}$ is consensual if it has equal row vectors. $U(a,b)$ denotes the uniform distribution over the interval $(a,b).$ 

\section{PRELIMINARIES AND PROBLEM STATEMENT}
\subsection{Network Model}
We consider a group of agents  communicating  over a directed graph $G\triangleq (\mathcal{V}, \mathcal{E})$, where $\mathcal{V}=\{1,2,3\ldots, n\}$ denotes the agent set and $\mathcal{E}\subset \mathcal{V} \times \mathcal{V}$ denotes the edge set, respectively. A communication link from agent $j$ to agent $i$ is denoted by $(j,i)\in \mathcal{ E}$. The agents who can  send messages to node $i$ are denoted as in-neighbours of agent $i$ and the set of these agents is denoted as $\mathcal{N}_i^{-}=\{j\in \mathcal{V}\mid (j,i)\in \mathcal{E}\}$. Similarly,  the agents who can  receive messages from node $i$ are denoted as out-neighbours of agent $i$ and the set of these agents is denoted as $\mathcal{N}_i^{+}=\{j\in \mathcal{V}\mid (i,j)\in \mathcal{E}\}$.  The adjacency matrix of the graph is denoted as $W=[w_{ij}]_{n\times n}$ with $w_{ij}>0$ if $(i,j)\in \mathcal{ E}$ or $i=j$, and $w_{ij}=0$ otherwise. The graph $G$ is called strongly connected if there exists at least
one directed path from any agent $i$ to any agent $j$ in the graph with $i\neq j$. 
\begin{assumption}\label{asp3}
	The directed graph $G\triangleq (\mathcal{V}, \mathcal{E})$ is strongly connected. Moreover, the adjacency matrix $W$ associated with $G$ is row-stochastic, i.e., $\sum_{l=1}^n w_{il}=1, \forall i \in \mathcal{V}$.
		\end{assumption}

\subsection{Compressors}
A stochastic compressor $\mathcal{C}(\mathbf{x}, \xi, C):  \mathbb{R}^d \times \mathcal{Z} \times \mathbb{R}^+ \rightarrow \mathbb{R}^d$ is a mapping that converts  a $\mathbb{R}^d$-valued signal $\mathbf{x}$ to a compressed one, where $C\geq 0$ is the  desired compression ratio and $\mathbf{\xi}$ is a  random variable with range $\mathcal{Z}$. Note that the realizations $\xi$ of the compressor $\mathcal{C}(.)$ are independent  among different agents and time steps. The notation $\mathbb{E}_{\xi}$ is used to  denote  the expectation over the  inherent randomness  in the stochastic compressor. Hereafter, we drop $\xi, C$ and write $\mathcal{C}_{i,k}(\mathbf{x})$ for notational simplicity.

\begin{assumption}\label{c3}
	For any agent $i \in \mathcal{V}$ and any iteration $k\ge0$, given all the recorded history of data on the network $\mathcal{H}_k$,
 the compression operator $\mathcal{C}_{i,k}$ satisfies
\begin{equation}\label{asp2}
	\mathbb{E}_{\xi}[||\mathcal{C}_{i,k}(\mathbf{x} )-\mathbf{x}||^2|\mathcal{H}_k]\leq C||\mathbf{x}||^2,\qquad \forall \mathbf{x}\in\mathbb{R}^d,
\end{equation}
where $C\geq 0$ and the r-scaling of $\mathcal{C}_{i,k}$ satisfies
\begin{equation}
	\mathbb{E}_{\xi}[||\mathcal{C}_{i,k}(\mathbf{x})/r-\mathbf{x}||^2|\mathcal{H}_k]\leq (1-\delta)||\mathbf{x}||^2,\qquad \forall \mathbf{x}\in\mathbb{R}^d, 
\end{equation}
where   $\delta \in (0,1]$ and $r>0$.
\end{assumption}

The above class of compressors incorporate the widely used unbiased compressors and biased but contractive compressors, such as stochastic quantization \cite{liu2021linear} and Top-$k$ sparsification \cite{beznosikov2020biased}. They also cover biased and non-contractive compressors like the norm-sign compressor. In other words, Assumption   \ref{c3} is weaker than various commonly used assumptions for compressors in the literature.

\subsection{Problem Formulation}
Consider a  non-cooperative game  in a multi-agent system of $n$ agents, where 
each agent has an unconstrained action set $\Omega_i=\mathbb{R}$. Without loss of generality, we assume that  each agent's decision variable is a scalar. Let $J_i$ denote  the cost function of the agent $i$. Then, the game is denoted as $\Gamma(n, \{J_{i}\}, \{\Omega_i=\mathbb{R}\})$.   The goal of each agent $i\in \mathcal{V}$ is to minimize its objective function $J_i(x_i,\mathbf{x_{-i}})$, which depends on both the local variable $x_i$ and the decision variables of the other agents $\mathbf{x_{-i}}$. The concept of NE is given below. 
\begin{definition}
	A vector $\mathbf{x}^\star=[x_1^\star,x_2^\star,\ldots,x_n^\star]^\top $ is an NE if for any $i \in \mathcal{V}$ and $x_i \in\mathbb{R}$,
$
		J_i(x_i^\star, \mathbf{x_{-i}}^\star)\leq J_i(x_i, \mathbf{x_{-i}}^\star).
$
\end{definition}

We then make the following assumptions with respect to game $\Gamma$. 
\begin{assumption}\label{cvx}
For  $i \in\mathcal{V}$, the  cost function $J_i(x_i,\mathbf{x_{-i}})$ is strongly convex and continuously differentiable in $x_i$ for each fixed $\mathbf{x_{-i}}$. 
\end{assumption}


\begin{definition}
	The mapping $\mathbf{F}: \mathbb{R}^n \rightarrow \mathbb{R}^n$, referred to be the game mapping of $\Gamma(n, \{J_{i}\}, \{\Omega_i \})$ is denoted as
\begin{equation}
	\mathbf{F(x)}\triangleq [\nabla_1 J_1(x_1,x_{-1}),\ldots, \nabla_n J_n(x_n,x_{-n})]^\top,
\end{equation} 
where $\nabla_i J_i(x_i,\mathbf{x_{-i}})=\frac{\partial J_i(x_i,\mathbf{x_{-i}})}{\partial x_i},\forall i \in\mathcal{V}$. 
\end{definition}

\begin{assumption}\label{mono}
	The game mapping $\mathbf{F(x)}$ is restricted strongly monotone  to any NE $\mathbf{x}^\star $ with constant $\mu_r>0$, i.e.,
	$$\langle \mathbf{F(x)}-\mathbf{F(\mathbf{x}^\star)},\mathbf{x}-\mathbf{\mathbf{x}^\star}  \rangle \geq \mu_r ||\mathbf{x-\mathbf{x}^\star}||^2, \forall \mathbf{x} \in\mathbb{R}^n.$$
		 
\end{assumption}

\begin{assumption}\label{asp2}
	Each function $\nabla_i J_i(x_i,x_{-i}) $ is uniformly Lipschitz continuous,  i.e., there exists a fixed constant  $ L_i \geq 0$ such that 	$\|\nabla_i J_i(\mathbf{x})-\nabla_i J_i(\mathbf{y}) \|\leq L_i \|\mathbf{x}-\mathbf{y}\|$ for all $\mathbf{x}, \mathbf{y}\in \mathbb{R}^n$.
\end{assumption}


In this paper, we are interested in distributed seeking of an NE in a game $\Gamma(n, \{J_{i}\}, \{\Omega_i =\mathbb{R}\})$ where Assumption \ref{asp3},\ref{cvx},\ref{mono} and \ref{asp2} hold. Note that given  Assumption \ref{cvx} and \ref{mono}, there exists an unique NE in the game $\Gamma(n, \{J_{i}\}, \{\Omega_i=\mathbb{R}\})$ \cite{chen2022linear} and it can be   can be alternatively characterized by using the first-order optimality conditions. To be specific,  $\mathbf{x}^\star \in \mathbb{R}^n$ is an NE if and only if 
\begin{equation}\label{ne}
	 \mathbf{F(x^\star)}=0. 
\end{equation}
To cope with incomplete information in a distributed network, we assume that each agent maintains a local variable 
\begin{equation}
	\mathbf{x}_{(i)}=[\tilde x_{(i)1},\ldots,\tilde x_{(i)i-1},x_i,\tilde x_{(i)i+1},\ldots,\tilde x_{(i)n}]^\top \in \mathbb{R}^n,
\end{equation}
 which is his estimation of the joint action profile $\mathbf{x}=[x_1,x_2,
 \ldots,x_n]^\top$, where $\tilde x_{(i)j}\in\mathbb{R}$ denotes agent $i$'s estimate of $x_j$ and $\tilde x_{(i)i}=x_i\in \mathbb{R}$. The objective of this work is to design an efficient algorithm with compressed and event-triggered communication, where each agents can find an  NE  in game $\Gamma$.

\section{Event-triggered and Compressed Distributed NE Seeking Algorithm (ETC-DNES)}\label{det}
In this section, we  propose a distributed NE seeking algorithm with event-triggered and compressed communication, where compressed information is exchanged at each triggered time. We show that the proposed algorithm can converge to the optimal solution based on  carefully chosen event-triggering thresholds. Furthermore, we establish conditions under which the algorithm exhibits linear convergence.
 \subsection{Algorithm Development}
 In this subsection, we introduce the ETC-DNES algorithm which is a distributed algorithm that carefully incorporates the difference  compression technique from \cite{chen2022linear} and deterministic event-triggered communication mechanism into the conventional NE seeking algorithm from \cite{tatarenko2020geometric1}. 
 The detailed procedures of ETC-DNES are outlined in Algorithm \ref{alg1} , and we provide a summary of the notations used throughout the algorithm in Table \ref{table}.

 \begin{table}[t]
 \caption{Descriptions of notations in  Algorithm \ref{alg1}}
 \label{table}
\centering
\begin{tabular}{cc}
\hline
Symbol & Description \\ \hline
\qquad\qquad$\mathbf{x}_{(i)}$       \qquad      \qquad         &  Local estimate of the joint action profile    \qquad    \\

\qquad\qquad$\mathbf{\hat x}_{(i)}$   \qquad\qquad                        &      Estimated  version of $\mathbf{x}_{(i)}$     \\
\qquad\qquad $\mathbf{\hat x}_{(i),w}$  \qquad\qquad                         &      Weighted average of $\mathbf{\hat x}_{(i)}$     \\
\qquad\qquad$\mathbf{h}_{i}$   \qquad\qquad                        &
      Reference point of $\mathbf{x}_{(i)}$       \\

\qquad\qquad$\mathbf{h}_{i,w}$  \qquad\qquad                         &      Weighted average of $\mathbf{h}_{i}$       \\
\qquad\qquad$\mathbf{q}_{i}$   \qquad\qquad                       &   Compressed value of $
\mathbf{x}_{(i)}-\mathbf{h}_{i}$     \\ 
\qquad\qquad$\mathbf{\tilde q}_{i}$   \qquad\qquad                       &Latest sent   value      \\ 
\qquad\qquad$\mathbb{I}_{i,k}$   \qquad\qquad                       &Triggering indicator    \\ 

\hline
\end{tabular}

\end{table}

 \begin{algorithm}[t]
\caption{ETC-DNES: Event-triggered and Compressed Distributed Nash Equilibrium Seeking  Algorithm}
\label{alg1}
 {\bf Input:} %
Stopping time $K$, gradient stepsize $\eta$, consensus stepsize $\gamma$, scaling parameter $\alpha>0$, and initial values $\mathbf{x}_{(i),0}, \mathbf{h}_{i,0}, \mathbf{\tilde q}_{i,-1}$\\
{\bf Output: $\mathbf{x}_{(i),k}$}
\begin{algorithmic}[1]
\For {each agent $i\in\mathcal{V}$}
\State $\mathbf{h}_{i,w,0}=\sum\limits_{j\in \mathcal{N}_i} w_{ij}\mathbf{h}_{j,0}$
\EndFor
\For {$k=0,1,2,\ldots, K-1$} locally at each agent $i\in\mathcal{V}$
\State  
$\mathbf{q}_{i,k}=\mathcal{C}_{i,k}(\mathbf{x}_{(i),k}-\mathbf{h}_{i,k})$.
\State  Run Algorithm \ref{alg_et} and obtain the triggering indicator $\mathbb{I}_{i,k}$
\If {$\mathbb{I}_{i,k}=1$,}
 \State Send $\mathbf{q}_{i,k}$ to agent $l \in \mathcal{N}_i^+$ and set $ \mathbf{\tilde q}_{i,k}=\mathbf{q}_{i,k}$
 \Else  \State Do not send and set $ \mathbf{\tilde q}_{i,k}=\mathbf{\tilde q}_{i,k-1}$.
 \EndIf

\State $\mathbf{\hat x}_{(i),k}=\mathbf{h}_{i,k}+\mathbf{\tilde q}_{i,k}$
\State $\mathbf{h}_{i,k+1}=(1-\alpha)\mathbf{h}_{i,k}+\alpha \mathbf{\hat x}_{(i),k}$
\State Let $\mathbf{\tilde q}_{j,k}=\mathbf{ q}_{j,k}$ if receive $\mathbf{q}_{j,k}$ from agent $j \in \mathcal{N}_i^-$. Otherwise, let  $\mathbf{\tilde q}_{j,k}=\mathbf{\tilde q}_{j,k-1}$
\State $\mathbf{\hat x}_{(i),w,k} =\mathbf{h}_{i,w,k}+\sum_{j=1}^n w_{ij}\mathbf{\tilde q}_{j,k}$
\State $\mathbf{h}_{i,w,k+1}=(1-\alpha)\mathbf{h}_{i,w,k}+\alpha\mathbf{\hat x}_{(i),w,k}$
\State $\mathbf{x}_{(i),k+1}=\mathbf{x}_{(i),k}-\gamma(\mathbf{\hat x}_{(i),k}-\mathbf{\hat x}_{(i),w,k})-\gamma \eta\nabla_i J_i(\mathbf{x}_{(i),k})\mathbf{e}_i$
\EndFor
 \end{algorithmic}
\end{algorithm}

\begin{algorithm}[t]
\caption{Event-triggering Condition Calculating Algorithm}
\label{alg_et}
 {\bf Input:} %
Time iteration $k$, state value $\mathbf{q}_{i,k}$ and $\mathbf{\tilde q}_{i,k-1}$, triggering thresholds $\{\tau_{i,k}\}_{k\geq 0}$  \\
{\bf Output: $\mathbb{I}_{i,k}$}
\begin{algorithmic}[1]
\If {$k=0$,}
\State $\mathbb{I}_{i,k}=1$
\Else
\If {$||\mathbf{q}_{i,k}-\mathbf{\tilde q}_{i,k-1}||\ge\tau_{i,k}$,}
 \State $\mathbb{I}_{i,k}=1$ \Else  \State $\mathbb{I}_{i,k}=0$
 \EndIf 
\EndIf
 \end{algorithmic}
\end{algorithm}

\textbf{Transmission of Compressed Value:}  In ETC-DNES, the transmission of information occurs only when an event is triggered at iteration $k$, denoted as $\mathbb{I}_{i,k}=1$. Instead of compressing the local variable $\mathbf{x}_{(i)}$, we maintain an auxiliary variable $\mathbf{h}_{i}^k$, acting as \textit{a reference point} of $\mathbf{x}_{(i)}$, and compress the difference $\mathbf{x}_{(i)}^k-\mathbf{h}_{i}^k$ (line 5 in Algorithm \ref{alg_et}).  This choice serves to mitigate compression errors, as the error tends to vanish as the reference point $\mathbf{h}_{i,k}$ converges to $\mathbf{x}_{(i),k}$. In addition, each agent $i$ obtains \textit{a coarse representation} of $\mathbf{x}_{(i)}$, denoted as $\mathbf{\hat x}_{(i)}$, by assembling  $\mathbf{h}_{i}^k$ and the \textit{latest sent value} from its neighbors,  $\mathbf{\tilde q}_{j}, \forall j \in \mathcal{N}_i^{-}$ (line 12 in Algorithm \ref{alg_et}). Then $\mathbf{h}_{i}^{k+1}$ is obtained as the weighted average of its previous value $\mathbf{h}_{i}^k$  and $\mathbf{\hat x}_{(i)}^k$ with mixing weight $\alpha$, indicating that $\mathbf{h}_{i}^k$ is tracking the motions of $\mathbf{x}_{i}^k$ (line 13 in Algorithm \ref{alg_et}). Moreover, the variable $\mathbf{h}_{i,w}$ is a \textit{weighted averaged} version of $\mathbf{h}_{i}$, which can be regarded as a backup copy for the neighboring information (line 16 in Algorithm \ref{alg_et}). The introduction of this auxiliary variable eliminates the need to store all the neighbors' variable $\mathbf{h}_{j}$.

\textbf{Event-triggered Mechanism:} To determine when messages should be sent to neighboring agents, each agent executes Algorithm \ref{alg_et} to calculate the triggering indicator at each time iteration. At the initial iteration, $k=0$, the indicator $\mathbb{I}_{i,k}$ is set to 1, indicating that agents automatically trigger  transmissions at the start. For subsequent iterations, $k\ge 1$, each agent evaluates the triggering condition (line 4 in Algorithm \ref{alg_et}) to determine the triggering indicator's value.

Specifically, agent $i$ transmits their current compressed value, $\mathbf{q}_{i,k}$, to their neighbors and updates the latest sent value, $ \mathbf{\tilde q}_{i,k}$, to equal the transmitted value only if the difference between the current compressed value and the latest sent value, $||\mathbf{q}_{i,k}-\mathbf{\tilde q}_{i,k-1}||$, exceeds or equals a predefined event-triggering threshold $\tau_{i,k}$. Otherwise, if the difference is less than $\tau_{i,k}$, agent $i$ refrains from communication in that round and retains the previous value as the latest sent value. Smaller values of $\tau_{i,k}$ result in more frequent communication between agents. When $\tau_{i,k}=0$ for every iteration, agents always transmit their compressed value to neighbors, rendering ETC-DNES equivalent to C-DNES \cite{chen2022linear} without event-triggering laws.

 We make the following mild assumption for the event-triggering threshold in ETC-DNES. 
 \begin{assumption}\label{thres}
 	The thresholds $\{\tau_{i,k}\}_{k\ge 0}, \forall i \in \mathcal{N}$ satisfies
 	\begin{equation}
 	\tau_{i,k}\ge 0,	\qquad \lim\limits_{k\rightarrow\infty} \tau_{i,k}=0.  
 	\end{equation}
 	
 	 \end{assumption}
Assumption \ref{thres} is  a mild assumption satisfied by many sequences.  For example,  the threshold can be chosen as a fraction form $\tau_{i,k}=\frac{a_i}{(k+b_i)^2}$ with $a_i>0, b_i>0$ or an exponential form $\tau_{i,k}=C_i a_i^k$ with $C_i>0$ and $0<a_i<1$.

For the convenience  of analysis, we define the event error as $\mathbf{e}_{i,k}=\mathbf{q}_{i,k}-\mathbf{\tilde q}_{i,k}$ and 
 denote the compact form of action-profile estimates from all agents as $\mathbf{X}=[\mathbf{x}_{(1)},\mathbf{x}_{(2 )},\ldots,\mathbf{x}_{(n)}]^\top\in \mathbb{R}^{n \times n},$
 where the $i$th row is  the estimation vector $\mathbf{x}_{(i)}, i\in\mathcal{V}$. Auxiliary variables of the agents in compact form $\mathbf{H},\mathbf{H}_w, \mathbf{Q},\mathbf{\tilde Q},\mathbf{E}, \mathbf{\hat X}$ and $ \mathbf{\hat X}_w$ are defined similarly.  At $k-$th iteration, their values are denoted by $\mathbf{X}_k, \mathbf{H}_k,\mathbf{H}_{w,k}, \mathbf{Q}_k, \mathbf{\tilde Q}_k,\mathbf{E}_k,\mathbf{\hat X}_k$ and $ \mathbf{\hat X}_{w,k}$ , respectively.

 Moreover, we define a diagonal matrix 
 $$\mathbf{\tilde F}(\mathbf{X})\triangleq \text{Diag}(\nabla_1 J_1(\mathbf{x}_{(1)}),\ldots, \nabla_n J_n(\mathbf{x}_{(n)}))\in \mathbb{R}^{n \times n}.$$

 Algorithm \ref{alg1} can be written in compact form as follows:
 \begin{subequations}\label{eq:alg1}
 \begin{align}\label{ala}
&\mathbf{Q}_k=\mathcal{C}_k(\mathbf{X}_k-\mathbf{H}_k),\\ \label{alb} 
&\mathbf{\hat X}_k=\mathbf{H}_k+\mathbf{Q}_k-\mathbf{E}_k,\\ 
&\mathbf{\hat X}_{w,k}=\mathbf{H}_{w,k}+W(\mathbf{Q}_k-\mathbf{E}_k),\\\label{ald}
 &\mathbf{H}_{k+1}=(1-\alpha)\mathbf{H}_k+\alpha\mathbf{\hat X}_k,\\
  &\mathbf{H}_{w,k+1}=(1-\alpha)\mathbf{H}_{w,k}+\alpha\mathbf{\hat X}_{w,k},\\
&\mathbf{X}_{k+1}=\mathbf{X}_k-\gamma(\mathbf{\hat X}_k-\mathbf{\hat X}_{w,k})-\gamma\eta \mathbf{\tilde F}(\mathbf{X}_k),\label{alf} 
 \end{align}
\end{subequations}
 where $\mathbf{X}_{0}$ and $\mathbf{H}_{0}$ are arbitrary.  

Since $\mathbf{H}^{k}_w=W\mathbf{H}^{k}$ and $\mathbf{\hat X}^{k}_w=W\mathbf{\hat X}^{k}$  for all $k$ \cite{chen2022linear},   the state variable update in \eqref{alf} becomes
\begin{equation}\label{algx}
	\begin{aligned}
		\mathbf{X}_{k+1}&=\mathbf{X}_k-\gamma(\mathbf{\hat X}_k-W\mathbf{\hat X}_k)-\gamma\eta \mathbf{\tilde F}(\mathbf{X}_k) \\
		&=\mathbf{X}_{k}-\gamma \mathbf{F}_a(\mathbf{X}_{k})+\gamma(I-W)\mathbf{ D}_{k},\\
	\end{aligned}
\end{equation}where $\mathbf{F}_a(\mathbf{X}_k)=(I-W)\mathbf{X}_k+\eta\mathbf{\tilde F}(\mathbf{X}_k)$ denotes  the \textit{augmented mapping}  \cite{tatarenko2020geometric1} of game $\Gamma$, $\mathbf{ D}_k=\mathbf{X}_k-\mathbf{\hat X}_k$ denotes the compression error for the decision variable. 
	
\begin{lemma} \label{lml}(Lemma 1 in \cite{tatarenko2020geometric1})
	Given Assumption \ref{asp2}, the augmented mapping $\mathbf{F}_a$	of game $\Gamma$ is Lipschitz continuous  with  $L_F=\eta L_m+||I-W||_{\text{F}}$, where $L_m=\max_iL_i$. 
\end{lemma}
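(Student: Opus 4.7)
The plan is to establish Lipschitz continuity of $\mathbf{F}_a$ by splitting it into its two additive components, $(I-W)\mathbf{X}$ and $\eta\tilde{\mathbf{F}}(\mathbf{X})$, bounding each term's Frobenius-norm increment separately, and then combining via the triangle inequality. Since $\mathbf{F}_a$ maps matrices to matrices, the relevant norm is $\|\cdot\|_{\mathrm{F}}$.

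First, I would write $\|\mathbf{F}_a(\mathbf{X})-\mathbf{F}_a(\mathbf{Y})\|_{\mathrm{F}} \leq \|(I-W)(\mathbf{X}-\mathbf{Y})\|_{\mathrm{F}} + \eta\,\|\tilde{\mathbf{F}}(\mathbf{X})-\tilde{\mathbf{F}}(\mathbf{Y})\|_{\mathrm{F}}$. For the linear part, the submultiplicative inequality $\|AB\|_{\mathrm{F}} \leq \|A\|_{\mathrm{F}}\|B\|_{\mathrm{F}}$ immediately gives $\|(I-W)(\mathbf{X}-\mathbf{Y})\|_{\mathrm{F}} \leq \|I-W\|_{\mathrm{F}}\,\|\mathbf{X}-\mathbf{Y}\|_{\mathrm{F}}$, which supplies the second term of $L_F$.

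For the gradient part, I would exploit the diagonal structure of $\tilde{\mathbf{F}}$. Since $\tilde{\mathbf{F}}(\mathbf{X})-\tilde{\mathbf{F}}(\mathbf{Y})$ is diagonal with $i$-th entry $\nabla_i J_i(\mathbf{x}_{(i)})-\nabla_i J_i(\mathbf{y}_{(i)})$, the squared Frobenius norm equals $\sum_{i=1}^n |\nabla_i J_i(\mathbf{x}_{(i)})-\nabla_i J_i(\mathbf{y}_{(i)})|^2$. Applying Assumption~\ref{asp2} componentwise and then bounding $L_i \leq L_m$, I obtain $\sum_i L_i^2 \|\mathbf{x}_{(i)}-\mathbf{y}_{(i)}\|^2 \leq L_m^2 \sum_i \|\mathbf{x}_{(i)}-\mathbf{y}_{(i)}\|^2 = L_m^2\,\|\mathbf{X}-\mathbf{Y}\|_{\mathrm{F}}^2$, yielding $\|\tilde{\mathbf{F}}(\mathbf{X})-\tilde{\mathbf{F}}(\mathbf{Y})\|_{\mathrm{F}} \leq L_m\,\|\mathbf{X}-\mathbf{Y}\|_{\mathrm{F}}$. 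Summing both bounds gives the stated constant $L_F = \eta L_m + \|I-W\|_{\mathrm{F}}$.

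Honestly I do not anticipate a genuine obstacle here; the statement is essentially a bookkeeping lemma. The only small subtlety is making sure the norm used on $\tilde{\mathbf{F}}(\mathbf{X})-\tilde{\mathbf{F}}(\mathbf{Y})$ matches the Frobenius convention used throughout, and that the Lipschitz bound for $\nabla_i J_i$ on the full action profile $\mathbf{x}_{(i)}\in\mathbb{R}^n$ (not on $x_i$ alone) is what is invoked — which is precisely what Assumption~\ref{asp2} provides. Everything else is the triangle inequality and submultiplicativity.
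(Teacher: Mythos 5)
Your argument is correct and is essentially the standard proof of this cited result (the paper itself defers to Lemma 1 of Tatarenko et al.\ rather than reproving it): split $\mathbf{F}_a$ into the linear term $(I-W)\mathbf{X}$ and the diagonal gradient term $\eta\tilde{\mathbf{F}}(\mathbf{X})$, use submultiplicativity of the Frobenius norm for the former and the row-wise application of Assumption~\ref{asp2} with $L_i\le L_m$ for the latter, and combine by the triangle inequality to get $L_F=\eta L_m+\|I-W\|_{\text{F}}$. No gaps.
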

\begin{lemma}\label{lmmu} (Lemma 3 in \cite{tatarenko2020geometric1})
	Given Assumption \ref{asp3}, \ref{mono} and \ref{asp2}, the augmented mapping $\mathbf{F}_a$	of game $\Gamma(n, \{J_{i}\}, \{\Omega_i\})$ is restricted strongly monotone to any NE matrix $\mathbf{X}^\star=\mathbf{1}(\mathbf{x}^\star)^\top$ with the constant $\mu_F=\min\{b_1,b_2\}>0$, where $b_1=\eta\mu_r/2n, b_2=(\beta^2\tilde \lambda_{\text{min}}(I-W)/(\beta^2+1))-\eta^2 L_m$ and $\beta$ is a positive constant such that $\beta^2+2\beta=\frac{\mu_r}{2n\eta L_m}$.  
\end{lemma}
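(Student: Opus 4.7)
The strategy is to first verify $\mathbf{F}_a(\mathbf{X}^\star)=\mathbf{0}$ and then lower-bound $\langle\mathbf{F}_a(\mathbf{X}),\mathbf{X}-\mathbf{X}^\star\rangle$ directly. The vanishing at $\mathbf{X}^\star$ follows from two ingredients already in the setup: row-stochasticity $W\mathbf{1}=\mathbf{1}$ (Assumption~\ref{asp3}) forces $(I-W)\mathbf{X}^\star=(I-W)\mathbf{1}(\mathbf{x}^\star)^\top=\mathbf{0}$, and the first-order NE characterization \eqref{ne} makes the $i$-th diagonal of $\tilde{\mathbf{F}}(\mathbf{X}^\star)$ equal to $\nabla_i J_i(\mathbf{x}^\star)=0$. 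So the lemma reduces to proving a one-sided strong-monotonicity-type bound for $\mathbf{F}_a$ evaluated at $\mathbf{X}$ alone.

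Next I would decompose $\mathbf{X}$ along the consensus direction. Set $\bar{\mathbf{x}}=\tfrac{1}{n}\mathbf{X}^\top\mathbf{1}$ and $\mathbf{X}_\perp=\mathbf{X}-\mathbf{1}\bar{\mathbf{x}}^\top$, so that $\mathbf{X}-\mathbf{X}^\star=\mathbf{1}(\bar{\mathbf{x}}-\mathbf{x}^\star)^\top+\mathbf{X}_\perp$ and the Frobenius norm splits as $\|\mathbf{X}-\mathbf{X}^\star\|_F^2=n\|\bar{\mathbf{x}}-\mathbf{x}^\star\|^2+\|\mathbf{X}_\perp\|_F^2$. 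For the Laplacian-like piece, $(I-W)\mathbf{X}=(I-W)\mathbf{X}_\perp$; since the consensus subspace lies in the kernel of (the symmetric part of) $I-W$, pairing with $\mathbf{X}-\mathbf{X}^\star$ collapses to $\langle(I-W)\mathbf{X}_\perp,\mathbf{X}_\perp\rangle\ge\tilde\lambda_{\min}(I-W)\|\mathbf{X}_\perp\|_F^2$. For the gradient piece, the key move is to add and subtract $\tilde{\mathbf{F}}(\mathbf{1}\bar{\mathbf{x}}^\top)$: the gap $\tilde{\mathbf{F}}(\mathbf{X})-\tilde{\mathbf{F}}(\mathbf{1}\bar{\mathbf{x}}^\top)$ is controlled coordinate-wise by $L_m\|\mathbf{x}_{(i)}-\bar{\mathbf{x}}\|$ via Assumption~\ref{asp2}, while $\eta\langle\tilde{\mathbf{F}}(\mathbf{1}\bar{\mathbf{x}}^\top)-\tilde{\mathbf{F}}(\mathbf{1}(\mathbf{x}^\star)^\top),\mathbf{X}-\mathbf{X}^\star\rangle$ picks out only the diagonal and reduces to $\eta\langle\mathbf{F}(\bar{\mathbf{x}})-\mathbf{F}(\mathbf{x}^\star),\bar{\mathbf{x}}-\mathbf{x}^\star\rangle\ge\eta\mu_r\|\bar{\mathbf{x}}-\mathbf{x}^\star\|^2$ by Assumption~\ref{mono}.

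Finally, the residual cross term produced by the Lipschitz step has the form $\eta L_m\|\mathbf{X}_\perp\|_F(\sqrt{n}\|\bar{\mathbf{x}}-\mathbf{x}^\star\|+\|\mathbf{X}_\perp\|_F)$, which I would absorb with Young's inequality $2ab\le\beta^2 a^2+b^2/\beta^2$. With the tuning $\beta^2+2\beta=\mu_r/(2n\eta L_m)$ supplied in the statement, exactly half of the $\eta\mu_r\|\bar{\mathbf{x}}-\mathbf{x}^\star\|^2$ consensus reserve survives to give the constant $b_1=\eta\mu_r/(2n)$, and spending a fraction $\beta^2/(\beta^2+1)$ of the disagreement reserve on the remaining cross contribution leaves the coefficient $b_2=\beta^2\tilde\lambda_{\min}(I-W)/(\beta^2+1)-\eta^2L_m$ in front of $\|\mathbf{X}_\perp\|_F^2$. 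Setting $\mu_F=\min\{b_1,b_2\}$ then closes the argument. The main obstacle is the cross-term bookkeeping: the diagonal structure of $\tilde{\mathbf{F}}$ interacts awkwardly with the full-matrix inner product against $\mathbf{X}-\mathbf{X}^\star$, so the $\sqrt{n}$ factors and the choice of $\beta$ must be tracked exactly to reproduce the algebraic relation $\beta^2+2\beta=\mu_r/(2n\eta L_m)$ and, equally importantly, to ensure $b_2>0$, which is precisely what constrains how small $\eta$ must be taken.
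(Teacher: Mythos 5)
The paper does not actually prove this lemma: it is imported verbatim by citation (Lemma~3 of \cite{tatarenko2020geometric1}), so your sketch is being compared against that source rather than an in-paper argument. Your route is essentially the standard one used there — verify $\mathbf{F}_a(\mathbf{X}^\star)=\mathbf{0}$, split $\mathbf{X}-\mathbf{X}^\star$ into the consensus direction $\mathbf{1}(\bar{\mathbf{x}}-\mathbf{x}^\star)^\top$ plus an orthogonal disagreement part, apply restricted strong monotonicity of $\mathbf{F}$ on the averaged component, Lipschitz continuity on the disagreement, and absorb cross terms with a Young-type split whose parameter is exactly the $\beta$ with $\beta^2+2\beta=\mu_r/(2n\eta L_m)$ — so the skeleton is right and matches the cited proof in spirit.

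Two points deserve care. First, the genuine gap: your treatment of the $(I-W)$ term. The claim that pairing $(I-W)\mathbf{X}_\perp$ against $\mathbf{X}-\mathbf{X}^\star$ ``collapses'' to $\langle(I-W)\mathbf{X}_\perp,\mathbf{X}_\perp\rangle$ requires the cross term $\langle(I-W)\mathbf{X}_\perp,\mathbf{1}(\bar{\mathbf{x}}-\mathbf{x}^\star)^\top\rangle$ to vanish, which needs $\mathbf{1}$ to be a \emph{left} null vector of $I-W$ (i.e., constant column sums of $W$), and the subsequent bound $\langle(I-W)\mathbf{X}_\perp,\mathbf{X}_\perp\rangle\ge\tilde\lambda_{\min}(I-W)\|\mathbf{X}_\perp\|_{\text{F}}^2$ needs $I-W$ (or its symmetric part, with null space exactly $\mathrm{span}\{\mathbf{1}\}$) to be positive semidefinite — indeed $\tilde\lambda_{\min}(\cdot)$ is only defined in this paper's notation for PSD matrices. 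These facts hold in the doubly stochastic / symmetric setting of \cite{tatarenko2020geometric1} but do not follow from Assumption~\ref{asp3} (row-stochastic, directed) as stated here; your proof should make that extra hypothesis explicit rather than hedge with ``(the symmetric part of)''. Second, a bookkeeping slip: since $\tilde{\mathbf{F}}$ is diagonal, $\langle\tilde{\mathbf{F}}(\mathbf{1}\bar{\mathbf{x}}^\top)-\tilde{\mathbf{F}}(\mathbf{X}^\star),\mathbf{X}-\mathbf{X}^\star\rangle$ picks out $\mathrm{diag}(\mathbf{X})-\mathbf{x}^\star$, not $\bar{\mathbf{x}}-\mathbf{x}^\star$, so an additional term $\eta\langle\mathbf{F}(\bar{\mathbf{x}})-\mathbf{F}(\mathbf{x}^\star),\mathrm{diag}(\mathbf{X}_\perp)\rangle$ appears; your $\sqrt{n}$ cross term is presumably meant to absorb it (via the $\sqrt{n}L_m$-Lipschitz bound on $\mathbf{F}$), but the ``reduces to'' sentence as written skips this, and whether your particular Young split reproduces exactly $b_1=\eta\mu_r/2n$ and $b_2=\beta^2\tilde\lambda_{\min}(I-W)/(\beta^2+1)-\eta^2 L_m$ is asserted rather than shown.
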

 \subsection{Convergence Analysis}
 In this subsection, we analyze the convergence performance of ETC-DNES. Let $\mathcal{F}_k$ be the $\sigma-$algebra generated by $\{\mathbf{ D}_{0},\mathbf{ D}_{1}, \ldots, \mathbf{D}_{k-1}\}$, and denote $\mathbb{E}_{\xi}[ \cdot| \mathcal{F}_k]$ as the conditional expectation  given $\mathcal{F}_k$. Now, we present the following lemmas.

\begin{lemma}\label{lme}
	Suppose Assumption \ref{thres} holds and   $0<\beta<1$. Under ETC-DNES, there is
$
	\lim_{k\rightarrow \infty} \sum\limits_{r=0}^k \beta^{k-r}||\mathbf{E}_r||^2_{\text{F}}=0.
	$
	\end{lemma}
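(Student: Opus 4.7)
The plan is to first derive a deterministic, pointwise bound on $\|\mathbf{E}_k\|_{\text{F}}^2$ in terms of the triggering thresholds $\tau_{i,k}$, and then invoke a standard fact about convolutions of a vanishing sequence with a geometrically decaying kernel.

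First, I would examine the event-triggering rule in Algorithm \ref{alg_et} case by case. For $k=0$ the indicator is forced to $\mathbb{I}_{i,0}=1$, so line 8 of Algorithm \ref{alg1} sets $\mathbf{\tilde q}_{i,0}=\mathbf{q}_{i,0}$ and hence $\mathbf{e}_{i,0}=\mathbf{0}$. For $k\ge 1$, if $\mathbb{I}_{i,k}=1$ then again $\mathbf{\tilde q}_{i,k}=\mathbf{q}_{i,k}$ and $\mathbf{e}_{i,k}=\mathbf{0}$; otherwise $\mathbb{I}_{i,k}=0$ forces $\mathbf{\tilde q}_{i,k}=\mathbf{\tilde q}_{i,k-1}$, so by the triggering test in line 4 of Algorithm \ref{alg_et}
$$
\|\mathbf{e}_{i,k}\|=\|\mathbf{q}_{i,k}-\mathbf{\tilde q}_{i,k-1}\|<\tau_{i,k}.
$$
In either branch $\|\mathbf{e}_{i,k}\|^2\le \tau_{i,k}^2$, so summing over $i$ yields
$$
\|\mathbf{E}_k\|_{\text{F}}^2\le \sum_{i=1}^n \tau_{i,k}^2 \;=:\; a_k.
$$
By Assumption \ref{thres} each $\tau_{i,k}\to 0$, so $a_k\to 0$.

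Second, I would show that for any nonnegative sequence $\{a_k\}$ with $a_k\to 0$ and any $\beta\in(0,1)$, $\sum_{r=0}^k\beta^{k-r}a_r\to 0$. Given $\epsilon>0$, choose $N$ so that $a_r\le \epsilon(1-\beta)/2$ for all $r\ge N$, and split
$$
\sum_{r=0}^k \beta^{k-r} a_r \;=\; \beta^{k-N+1}\sum_{r=0}^{N-1}\beta^{N-1-r}a_r \;+\; \sum_{r=N}^k \beta^{k-r}a_r.
$$
The first term decays to $0$ as $k\to\infty$ because $\beta^{k-N+1}\to 0$ while the inner sum is a fixed finite constant depending only on $N$. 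The second term is bounded by $\bigl(\epsilon(1-\beta)/2\bigr)\sum_{r=N}^{k}\beta^{k-r}\le \epsilon/2$ uniformly in $k$. Combining yields $\sum_{r=0}^k\beta^{k-r}a_r<\epsilon$ for $k$ sufficiently large, proving the claim. Applying this with $a_k=\sum_i\tau_{i,k}^2$ and using the upper bound above finishes the proof.

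There is no real obstacle; the key insight is simply that the triggering mechanism bounds the event error \emph{deterministically} by the threshold (regardless of whether the trigger actually fires), so the vanishing of $\tau_{i,k}$ in Assumption \ref{thres}, together with $\beta<1$, is enough. The only subtlety is to handle the $k=0$ case of Algorithm \ref{alg_et} separately when writing out the bound, but this case contributes zero error.
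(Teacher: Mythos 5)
Your proposal is correct and follows essentially the same route as the paper: you bound the event error deterministically by the threshold, $\|\mathbf{e}_{i,k}\|\le\tau_{i,k}$ for all $k$ (the paper argues this by the same trigger/no-trigger case analysis), conclude $\|\mathbf{E}_k\|_{\text{F}}^2\le\sum_{i}\tau_{i,k}^2\to 0$, and then handle the geometric convolution. The only difference is cosmetic: the paper cites Lemma 3.1 of its reference \cite{sundhar2010distributed} for the fact that $\sum_{r=0}^k\beta^{k-r}a_r\to 0$ when $a_k\to 0$ and $0<\beta<1$, whereas you prove it directly with the standard split-at-$N$ argument, which is valid.
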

\begin{proof}
 We first show that  $\lim_{k\rightarrow\infty} ||\mathbf{E}_k||^2_{\text{F}}=0$. 	From the event-triggering condition in Algorithm 
	\ref{alg1}, we can obtain that for all realization of $\mathbf{q}_{i,k}, \forall k\ge 0$, if $||\mathbf{q}_{i,k+1}- \mathbf{\tilde q}_{i,k}||\ge \tau_{i,k+1}$, then $||\mathbf{q}_{i,k+1}- \mathbf{\tilde q}_{i,k+1}||=0\le\tau_{i,k+1}$. Otherwise, $||\mathbf{q}_{i,k+1}- \mathbf{\tilde q}_{i,k}||< \tau_{i,k+1}$. Since $\mathbf{\tilde q}_{i,k+1}=\mathbf{\tilde q}_{i,k}$ in such case, we still have $||\mathbf{q}_{i,k+1}- \mathbf{\tilde q}_{i,k+1}||\le\tau_{i,k+1}$. Recall that $\mathbf{q}_{i,0}= \mathbf{\tilde q}_{i,0}$, we can obtain $||\mathbf{q}_{i,k}- \mathbf{\tilde q}_{i,k}||\le\tau_{i,k}, \forall k\ge 0$, i.e., $||\mathbf{e}_{i,k}||\leq \tau_{i,k}, \forall k\ge 0, i\in\mathcal{V}$. 
	
	Thus, $||\mathbf{E}_k||^2_{\text{F}}=\sum_i^n ||\mathbf{e}_{i,k}||^2\le \sum_i^n \tau_{i,k}^2$.  Given Assumption \ref{thres}, we have $\lim_{k\rightarrow\infty} \tau_{i,k}=0, \forall i\in\mathcal{N}$. Then we can obtain $\lim_{k\rightarrow\infty} ||\mathbf{E}_k||^2_{\text{F}}=0$. Based on   Lemma 3.1 in \cite{sundhar2010distributed},  $\lim_{k\rightarrow \infty} \sum\limits_{r=0}^k \beta^{k-r}||\mathbf{E}_r||^2_{\text{F}}=0$ with $0<\beta<1$.
\end{proof}
\begin{lemma}\label{lmm}
	Suppose Assumption \ref{c3} holds, the following inequality holds for ETC-DNES
	\begin{equation}
		\mathbb{E}_{\xi}[||\mathbf{X}_k-\mathbf{\hat X}_k||^2_{\text{F}}\mid \mathcal{F}_{k}]\leq 2C||\mathbf{X}_k-\mathbf{H}_k||^2_{\text{F}}+2||\mathbf{E}_k||_{\text{F}}^2.
	\end{equation}
\end{lemma}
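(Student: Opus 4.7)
The plan is to read off the required decomposition directly from the compact update equations (\ref{ala}) and (\ref{alb}), then apply the basic inequality $\|a+b\|^2 \le 2\|a\|^2 + 2\|b\|^2$ row-wise, and finally invoke the compressor bound of Assumption \ref{c3}.

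First I would substitute $\mathbf{Q}_k=\mathcal{C}_k(\mathbf{X}_k-\mathbf{H}_k)$ from \eqref{ala} into \eqref{alb} to obtain the identity
\begin{equation*}
\mathbf{X}_k-\mathbf{\hat X}_k \;=\; \bigl[(\mathbf{X}_k-\mathbf{H}_k)-\mathcal{C}_k(\mathbf{X}_k-\mathbf{H}_k)\bigr] \;+\; \mathbf{E}_k.
\end{equation*}
Applying $\|a+b\|_{\text{F}}^2\leq 2\|a\|_{\text{F}}^2+2\|b\|_{\text{F}}^2$ gives, before taking expectations,
\begin{equation*}
\|\mathbf{X}_k-\mathbf{\hat X}_k\|_{\text{F}}^2 \;\leq\; 2\bigl\|(\mathbf{X}_k-\mathbf{H}_k)-\mathcal{C}_k(\mathbf{X}_k-\mathbf{H}_k)\bigr\|_{\text{F}}^2 \;+\; 2\|\mathbf{E}_k\|_{\text{F}}^2.
\end{equation*}

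Next, I would take $\mathbb{E}_\xi[\,\cdot\mid\mathcal{F}_k]$ on both sides. Since the Frobenius norm decomposes into row-wise squared $\ell_2$ norms, the first term becomes
\begin{equation*}
\sum_{i=1}^n \mathbb{E}_\xi\bigl[\,\|\mathcal{C}_{i,k}(\mathbf{x}_{(i),k}-\mathbf{h}_{i,k})-(\mathbf{x}_{(i),k}-\mathbf{h}_{i,k})\|^2 \,\big|\, \mathcal{F}_k\bigr].
\end{equation*}
Because $\mathbf{X}_k$ and $\mathbf{H}_k$ are $\mathcal{F}_k$-measurable (they are determined before the compressor is applied at step $k$), Assumption \ref{c3} applies to each summand, bounding it by $C\|\mathbf{x}_{(i),k}-\mathbf{h}_{i,k}\|^2$; summation over $i$ yields $C\|\mathbf{X}_k-\mathbf{H}_k\|_{\text{F}}^2$, so the factor of $2$ delivers the $2C\|\mathbf{X}_k-\mathbf{H}_k\|_{\text{F}}^2$ term on the right-hand side.

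The only subtlety, and the step I expect to need the most care with, is the second term $2\|\mathbf{E}_k\|_{\text{F}}^2$: the event error $\mathbf{E}_k=\mathbf{Q}_k-\mathbf{\tilde Q}_k$ is not $\mathcal{F}_k$-measurable because $\mathbf{Q}_k$ carries the stochasticity of the step-$k$ compressor realization. The clean resolution is to observe, as in the proof of Lemma \ref{lme}, that the triggering rule enforces $\|\mathbf{e}_{i,k}\|\le \tau_{i,k}$ \emph{pathwise}, so $\|\mathbf{E}_k\|_{\text{F}}^2$ is deterministically bounded and the conditional expectation is trivial; alternatively the statement can be read with $\mathbb{E}_\xi[\|\mathbf{E}_k\|_{\text{F}}^2\mid\mathcal{F}_k]$ on the right, which coincides with $\|\mathbf{E}_k\|_{\text{F}}^2$ under the pathwise bound. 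Combining the two bounds gives the stated inequality.
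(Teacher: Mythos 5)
Your proof is correct and takes essentially the same route as the paper's: substitute \eqref{ala} into \eqref{alb} to write $\mathbf{X}_k-\mathbf{\hat X}_k=(\mathbf{X}_k-\mathbf{H}_k)-\mathcal{C}_k(\mathbf{X}_k-\mathbf{H}_k)+\mathbf{E}_k$, apply $\|a+b\|_{\text{F}}^2\le 2\|a\|_{\text{F}}^2+2\|b\|_{\text{F}}^2$, and bound the compression term row-wise via Assumption \ref{c3}. Your closing remark on the $\mathcal{F}_k$-measurability of $\mathbf{E}_k$ addresses a subtlety the paper's one-line proof glosses over, and your resolution via the pathwise triggering bound is consistent with how the paper later uses this lemma.
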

\begin{proof}
 From Equation \eqref{asp3} and  \eqref{alb}, we have  
\begin{equation}\label{comp}
\begin{aligned}
&\mathbb{E}_{\xi}[||\mathbf{X}_k-\mathbf{\hat X}_k||^2_{\text{F}}\mid \mathcal{F}_{k}]\\&=\mathbb{E}_{\xi}[||\mathbf{X}_k-\mathbf{H}_k-\mathcal{C}_k(\mathbf{X}_k-\mathbf{H}_k)+\mathbf{E}_k||^2_{\text{F}}\mid \mathcal{F}_{k}]	\\
&\leq 2C||\mathbf{X}_k-\mathbf{H}_k||^2_{\text{F}}+2||\mathbf{E}_k||_{\text{F}}^2. \\
\end{aligned} 
\end{equation} 
\end{proof}

The following lemma is crucial for establishing a linear system of inequalities that  bound $\mathbb{E}_{\xi}[||\mathbf{X}_{k}-\mathbf{X}^\star||^2_{\text{F}}]$ and $\mathbb{E}_{\xi}[||\mathbf{X}_{k}-\mathbf{H}_{k}||^2_{\text{F}}]$.  \begin{lemma}\label{mainl}
 	Given Assumptions \ref{asp3}--\ref{asp2}, when $\alpha \in(0,\frac{1}{r}]$, the following  component-wise inequalities holds  for ETC-DNES 
\begin{equation}\label{inequlity}
	\mathbf{V}_{k+1}\le \mathbf{A}\mathbf{V}_{k}+\mathbf{B}||\mathbf{E}_k||^2_{\text{F}},
\end{equation}	
	where 	$	\mathbf{V}_{k}=\begin{bmatrix}
	\mathbb{E}_{\xi}[||\mathbf{X}_{k}-\mathbf{X}^\star||^2_{\text{F}}]&\mathbb{E}_{\xi}[||\mathbf{X}_{k}-\mathbf{H}_{k}||^2_{\text{F}}]\\
\end{bmatrix}^\top,	
$
and the  elements of transition matrix $\mathbf{A} = [a_{ij}]$ and $\mathbf{B} = [b_{ij}]$ are respectively given by
	\vspace*{-1mm}
	\begin{equation}
	\begin{aligned}
		&\mathbf{A}=\begin{bmatrix}
	c_1(1+L_F^2\gamma^2-2\mu_F\gamma)&c_2\gamma^2\\
	c_3\gamma^2 L_F^2&c_x+c_4\gamma^2\\
	\end{bmatrix},
\\
&\mathbf{B}=\begin{bmatrix}
 c_2\gamma^2&c_4\gamma^2
\end{bmatrix}^\top,	
	\end{aligned}
	\end{equation}
with $c_1=\frac{2L_F^2-\mu_F^2}{2L_F^2-2\mu_F^2},c_2=\frac{2c_1||I-W||_{\text{F}}^2C}{c_1-1},c_3=\frac{4-2\alpha r \delta}{\alpha r \delta}, c_4=2c_3C||(I-W)||^2_{\text{F}},c_x=\frac{2-\alpha r \delta}{2}$.
 \end{lemma}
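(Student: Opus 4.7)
The plan is to establish the two rows of the matrix inequality separately, bounding $\mathbb{E}_\xi[\|\mathbf{X}_{k+1}-\mathbf{X}^\star\|_F^2]$ and $\mathbb{E}_\xi[\|\mathbf{X}_{k+1}-\mathbf{H}_{k+1}\|_F^2]$ by combinations of $\mathbf{V}_k$ and $\|\mathbf{E}_k\|_F^2$. For the first row, I would start from the compact dynamics \eqref{algx}, use $\mathbf{F}_a(\mathbf{X}^\star)=\mathbf{0}$, and write $\mathbf{X}_{k+1}-\mathbf{X}^\star = [(\mathbf{X}_k-\mathbf{X}^\star)-\gamma(\mathbf{F}_a(\mathbf{X}_k)-\mathbf{F}_a(\mathbf{X}^\star))] + \gamma(I-W)\mathbf{D}_k$. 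A Young's inequality splits this into an "ideal-flow" part and a compression-error part; the ideal part is bounded using restricted strong monotonicity (Lemma \ref{lmmu}) and Lipschitz continuity (Lemma \ref{lml}), yielding the factor $1+L_F^2\gamma^2-2\mu_F\gamma$. I would tune the Young's parameter so that $1+\epsilon=c_1$, which gives $1+1/\epsilon=c_1/(c_1-1)$; the compression-error term is then controlled by $\gamma^2\|I-W\|_F^2\,\mathbb{E}_\xi[\|\mathbf{D}_k\|_F^2]$, which Lemma \ref{lmm} bounds by $2C\|\mathbf{X}_k-\mathbf{H}_k\|_F^2+2\|\mathbf{E}_k\|_F^2$. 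Collecting constants reproduces the first row with coefficient $c_2\gamma^2$ on both $\|\mathbf{X}_k-\mathbf{H}_k\|_F^2$ and $\|\mathbf{E}_k\|_F^2$.

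For the second row I would use $\mathbf{H}_{k+1}=\mathbf{H}_k+\alpha(\mathbf{Q}_k-\mathbf{E}_k)$ to decompose
\[
\mathbf{X}_{k+1}-\mathbf{H}_{k+1} = \bigl[(\mathbf{X}_k-\mathbf{H}_k)-\alpha\mathcal{C}_k(\mathbf{X}_k-\mathbf{H}_k)\bigr] + (\mathbf{X}_{k+1}-\mathbf{X}_k) + \alpha\mathbf{E}_k.
\]
The key step is then a contraction bound on the first bracket. Using the convex combination $\mathbf{x}-\alpha\mathcal{C}(\mathbf{x}) = \alpha r\,(\mathbf{x}-\mathcal{C}(\mathbf{x})/r) + (1-\alpha r)\mathbf{x}$, valid precisely when $\alpha\le 1/r$, followed by Jensen's inequality and the $r$-scaling property in Assumption \ref{c3}, gives
\[
\mathbb{E}_\xi\bigl[\|(\mathbf{X}_k-\mathbf{H}_k)-\alpha\mathcal{C}_k(\mathbf{X}_k-\mathbf{H}_k)\|_F^2\mid\mathcal{F}_k\bigr] \le (1-\alpha r\delta)\|\mathbf{X}_k-\mathbf{H}_k\|_F^2.
\]
A Young's inequality with parameter $\theta$ chosen so that $(1+\theta)(1-\alpha r\delta)=c_x=(2-\alpha r\delta)/2$ yields $1+1/\theta=(2-\alpha r\delta)/(\alpha r\delta)$, producing the $c_x$ coefficient on $\|\mathbf{X}_k-\mathbf{H}_k\|_F^2$. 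The remaining $(1+1/\theta)\|(\mathbf{X}_{k+1}-\mathbf{X}_k)+\alpha\mathbf{E}_k\|_F^2$ is split once more by $\|a+b\|_F^2\le 2\|a\|_F^2+2\|b\|_F^2$, absorbing the factor $2$ into $c_3=2(1+1/\theta)$.

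Finally, I would bound $\|\mathbf{X}_{k+1}-\mathbf{X}_k\|_F^2$ directly from \eqref{algx} and Lemma \ref{lml}: since $\mathbf{F}_a(\mathbf{X}^\star)=\mathbf{0}$, $\|\mathbf{F}_a(\mathbf{X}_k)\|_F\le L_F\|\mathbf{X}_k-\mathbf{X}^\star\|_F$, and $\|(I-W)\mathbf{D}_k\|_F^2$ is controlled through Lemma \ref{lmm}. Matching the coefficients then gives the $c_3\gamma^2 L_F^2$ entry on $\|\mathbf{X}_k-\mathbf{X}^\star\|_F^2$ and produces $c_4\gamma^2$ (with $c_4=2c_3C\|I-W\|_F^2$) on both $\|\mathbf{X}_k-\mathbf{H}_k\|_F^2$ (additively with the $c_x$ term) and $\|\mathbf{E}_k\|_F^2$.

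The main obstacle will be the second row: getting the contraction from the $r$-scaling property to match $c_x$ exactly depends on the convex-combination trick that requires $\alpha\le 1/r$, and the subsequent Young's inequality constants must be threaded through two nested splits so that $c_3$, $c_4$, and $c_x$ come out as stated. The conditioning on $\mathcal{F}_k$ has to be handled carefully because $\mathbf{E}_k$ is $\mathcal{F}_k$-measurable but $\mathcal{C}_k$ injects fresh randomness; only the compressor-related norms use the conditional compression inequalities, while the event-triggered errors pass through as deterministic quantities relative to $\mathcal{F}_k$.
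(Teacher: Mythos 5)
Your route is essentially the paper's own: the first row is obtained exactly as in the paper's proof (a Young's split with $1+\epsilon=c_1$ around the fixed-point identity $\mathbf{X}^\star=\mathbf{X}^\star-\gamma\mathbf{F}_a(\mathbf{X}^\star)$, then Lemma \ref{lml} and Lemma \ref{lmmu} to get the factor $1+L_F^2\gamma^2-2\mu_F\gamma$, then Lemma \ref{lmm} to convert $\|\mathbf{D}_k\|_{\text{F}}^2$ into $\|\mathbf{X}_k-\mathbf{H}_k\|_{\text{F}}^2$ and $\|\mathbf{E}_k\|_{\text{F}}^2$), and the second row uses the same convex-combination/$r$-scaling contraction (which is where $\alpha\le 1/r$ enters) followed by nested Young's splits, with the same tuning that yields $c_x$, $c_3$, $c_4$. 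Your remark on conditioning ($\mathbf{E}_k$ is $\mathcal{F}_k$-measurable, only the compressor norms use the conditional inequalities) is also how the paper proceeds.

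The one place where your bookkeeping does not close as claimed is the second row. With your decomposition $\mathbf{X}_{k+1}-\mathbf{H}_{k+1}=[(\mathbf{X}_k-\mathbf{H}_k)-\alpha\mathcal{C}_k(\mathbf{X}_k-\mathbf{H}_k)]+(\mathbf{X}_{k+1}-\mathbf{X}_k)+\alpha\mathbf{E}_k$ (which is the faithful one, since \eqref{alb} and \eqref{ald} give $\mathbf{H}_{k+1}=\mathbf{H}_k+\alpha(\mathbf{Q}_k-\mathbf{E}_k)$), the Young's coefficient on the non-contractive block is $(1+1/\theta)=c_3/2$, and your additional split $\|a+b\|_{\text{F}}^2\le 2\|a\|_{\text{F}}^2+2\|b\|_{\text{F}}^2$ then places the full $c_3$ on $\|\mathbf{X}_{k+1}-\mathbf{X}_k\|_{\text{F}}^2$; substituting $\|\mathbf{X}_{k+1}-\mathbf{X}_k\|_{\text{F}}^2\le 2\gamma^2L_F^2\|\mathbf{X}_k-\mathbf{X}^\star\|_{\text{F}}^2+2\gamma^2\|I-W\|_{\text{F}}^2\|\mathbf{D}_k\|_{\text{F}}^2$ from \eqref{algx} therefore produces entries $2c_3\gamma^2L_F^2$ and $c_x+2c_4\gamma^2$, plus an extra $c_3\alpha^2\|\mathbf{E}_k\|_{\text{F}}^2$ --- not the stated $c_3\gamma^2L_F^2$, $c_x+c_4\gamma^2$ and $c_4\gamma^2$. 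The paper lands on the stated constants because its expansion \eqref{cpe1} keeps only $(\mathbf{X}_{k+1}-\mathbf{X}_k)$ next to the contraction bracket (i.e., it silently drops the $\alpha\mathbf{E}_k$ term, working as if $\mathbf{H}_{k+1}=\mathbf{H}_k+\alpha\mathbf{Q}_k$), so that the coefficient $c_3/2$ sits on $\|\mathbf{X}_{k+1}-\mathbf{X}_k\|_{\text{F}}^2$ alone and the remaining factor $2$ is supplied inside the bound on $\|\mathbf{X}_{k+1}-\mathbf{X}_k\|_{\text{F}}^2$. So "matching the coefficients" does not follow from the splits you describe: either you handle $\alpha\mathbf{E}_k$ explicitly (a three-way split, again changing the constants) or you accept constants inflated by a factor of $2$ plus an $O(\alpha^2)\|\mathbf{E}_k\|_{\text{F}}^2$ term. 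This is a constant-level, not structural, discrepancy --- the linear-system form survives and Theorem \ref{th1} would go through with adjusted stepsize bounds --- but the lemma with exactly the stated $c_3$, $c_4$, $c_x$ is not what your derivation yields. (A smaller shared caveat: both your row-one claim and the paper's place $c_2\gamma^2$, which contains a factor $C$, on $\|\mathbf{E}_k\|_{\text{F}}^2$, whereas Lemma \ref{lmm} actually delivers that coefficient without $C$; this only dominates when $C\ge1$, but it is inherited from the statement itself.)
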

 \begin{proof}
We derive two inequalities in terms of NE-seeking error and compression error, respectively. 
\textit{NE-seeking error:}    

Based on Proposition 1 in \cite{tatarenko2020geometric1} , we conclude that $\mathbf{X}^\star=\mathbf{X}^\star-\gamma\mathbf{F}_a(\mathbf{X}^\star), \forall \gamma>0$.  Thus,   we can obtain 
\begin{equation}\label{eqxstar1}
	\begin{aligned}
		&||\mathbf{X}_{k+1}-\mathbf{X}^\star||_{\text{F}}^2\leq \frac{c_1}{c_1-1}\gamma^2||I-W||_{\text{F}}^2||\mathbf{ D}_{k}||_{\text{F}}^2,\\
		&+c_1||\mathbf{X}_{k}-\gamma \mathbf{F}_a(\mathbf{X}_{k})-\mathbf{X}^\star+\gamma\mathbf{F}_a(\mathbf{X}^\star)||_{\text{F}}^2, \\
	\end{aligned}
\end{equation} 
where $c_1>1$. 

Next, we bound
\begin{equation}\label{eqxstar}
	\begin{aligned}
		&||\mathbf{X}_{k}-\gamma \mathbf{F}_a(\mathbf{X}_{k})-\mathbf{X}^\star+\gamma\mathbf{F}_a(\mathbf{X}^\star)||_{\text{F}}^2=||\mathbf{X}_{k}-\mathbf{X}^\star||_{\text{F}}^2\\
		&+\gamma^2|| \mathbf{F}_a(\mathbf{X}_{k})-\mathbf{F}_a(\mathbf{X}^\star)||_{\text{F}}^2-2\gamma\langle \mathbf{F}_a(\mathbf{X}_{k})-\mathbf{F}_a(\mathbf{X}^\star),\mathbf{X}_{k}-\mathbf{X}^\star\rangle\\
		&\le (1+\gamma^2L_F^2-2\gamma\mu_F)||\mathbf{X}_{k}-\mathbf{X}^\star||_{\text{F}}^2,
	\end{aligned}
\end{equation}
where the last inequality is based on Lemma \ref{lml} and Lemma \ref{lmmu}. 

From \eqref{eqxstar},  we have $\min\{1+\gamma^2L_F^2-2\gamma\mu_F\}=(1-\mu_F^2/L_F^2)>0$. Combining \eqref{eqxstar1} and \eqref{eqxstar}, we can have the following inequality by taking $c_1=(2L_F^2-\mu_F^2)/(2L_F^2-2\mu_F^2)$.

\begin{equation}
	\begin{aligned}
		||\mathbf{X}_{k+1}-\mathbf{X}^\star||_{\text{F}}^2&\le c_1(1+\gamma^2L_F^2-2\gamma\mu_F)||\mathbf{X}_{k}-\mathbf{X}^\star||_{\text{F}}^2 \\
		&+\frac{c_1\gamma^2||I-W||_{\text{F}}^2}{c_1-1}||\mathbf{ D}_{k}||_{\text{F}}^2.
	\end{aligned}
\end{equation}

Then, from Lemma \ref{lmm}, we can obtain
\begin{equation}\label{nee}
	\begin{aligned}
		&\mathbb{E}_{\xi}[||\mathbf{X}_{k+1}-\mathbf{X}^\star||_{\text{F}}^2\mid \mathcal{F}^k]\\
		&\le c_1(1+\gamma^2L_F^2-2\gamma\mu_F)\mathbb{E}_{\xi}[||\mathbf{X}_{k}-\mathbf{X}^\star||_{\text{F}}^2\mid \mathcal{F}^k] \\
		&+c_2\gamma^2\mathbb{E}_{\xi}[||\mathbf{X}_{k}- \mathbf{H}_{k}||_{\text{F}}^2\mid \mathcal{F}^k]+c_2\gamma^2||\mathbf{E}_{k}||_{\text{F}}^2,
	\end{aligned}
\end{equation}
where $c_2=\frac{2c_1||I-W||_{\text{F}}^2C}{c_1-1}$. 

\textit{Compression error of the decision variable:} 

Denote $\mathcal{C}_r^k(\mathbf{X}_{k})=\mathcal{C}^k(\mathbf{X}_{k})/r$,  according to \eqref{ald}, for $0<\alpha\leq \frac{1}{r}$, we have 
\begin{equation}\label{cpe1}
	\begin{split}
		&    	||\mathbf{X}_{k+1}- \mathbf{H}_{k+1}||^2\\
		=&||\mathbf{X}_{k+1}-\mathbf{X}_{k}+\alpha r(\mathbf{X}_{k}- \mathbf{H}_{k}-\mathcal{C}_r^k(\mathbf{X}_{k}- \mathbf{H}_{k}))\\
		&+(1-\alpha r)(\mathbf{X}_{k}- \mathbf{H}_{k})||^2\\
		\leq &\tau_2\Big [\alpha r||\mathbf{X}_{k}- \mathbf{H}_{k}-\mathcal{C}_r^k(\mathbf{X}_{k}- \mathbf{H}_{k})||^2\\
		&+(1-\alpha r)||\mathbf{X}_{k}- \mathbf{H}_{k}||^2\Big ]+\frac{\tau_2}{\tau_2-1}||\mathbf{X}_{k+1}-\mathbf{X}_{k}||^2,\\
	\end{split}
\end{equation}
where $\tau_2=\frac{2-\alpha r \delta}{2-2\alpha r \delta}>1$. 

Taking conditional expectation on both sides of \eqref{cpe1}, we obtain
\begin{equation}\label{cpe2}
	\begin{split}
		&\mathbb{E}_{\xi}[||\mathbf{X}_{k+1}- \mathbf{H}_{k+1}||^2\mid \mathcal{F}^k]\\
		&\leq \tau_2[\alpha r (1-\delta)+(1-\alpha r)]\mathbb{E}_{\xi}[||\mathbf{X}_{k}- \mathbf{H}_{k}||^2\mid \mathcal{F}^k]\\
		&+\frac{\tau_2}{\tau_2-1}\mathbb{E}_{\xi}[||\mathbf{X}_{k+1}-\mathbf{X}_{k}||^2\mid \mathcal{F}^k],\\ 
	\end{split}
\end{equation}
where the inequality holds based on Assumption \ref{c3}. 

Moreover, based on Lemma \ref{lmm}, we have
\begin{equation}\label{cpe3}
	\begin{split}
		&\mathbb{E}_{\xi}[||\mathbf{X}_{k+1}-\mathbf{X}_{k}||^2\mid \mathcal{F}^k]\\
		&\leq 2\gamma^2C||(I-W)||^2 (2\mathbb{E}_{\xi}[||\mathbf{X}_{k}-\mathbf{H}_{k}||^2\mid \mathcal{F}^k]+2||\mathbf{E}_{k}||_{\text{F}}^2)\\
		&+2\gamma^2 L_F^2\mathbb{E}_{\xi}[||\mathbf{X}_{k}- \mathbf{X}^{\star}||^2\mid \mathcal{F}^k] \\
	\end{split}
\end{equation}
Bringing \eqref{cpe3} into \eqref{cpe2} and denoting $c_3=\frac{2\tau_2 }{\tau_2-1}=\frac{4-2\alpha r \delta}{\alpha r \delta}, c_x=\tau_2[\alpha r (1-\delta)+(1-\alpha r)]=\tau_2(1-\alpha r\delta)=\frac{2-\alpha r \delta}{2}<1$, we have
\begin{equation}\label{cpe4}
	\begin{split}
		&\mathbb{E}_{\xi}[||\mathbf{X}_{k+1}- \mathbf{H}_{k+1}||^2_{\text{F}}\mid \mathcal{F}^k]\leq c_3\gamma^2 L_F^2\mathbb{E}_{\xi}[||\mathbf{X}_{k}-\mathbf{X}^{\star}||_{\text{F}}^2\mid \mathcal{F}^k]\\
		&+ (c_x+c_4\gamma^2)\mathbb{E}_{\xi}[||\mathbf{X}_{k}- \mathbf{H}_{k}||^2\mid \mathcal{F}^k]+c_4\gamma^2||\mathbf{E}_{k}||_{\text{F}}^2,\\
	\end{split}
\end{equation}
where $c_4=2c_3C||(I-W)||^2_{\text{F}}$. 

Combing  equalities \eqref{nee} and \eqref{cpe4} completes the proof. 
\end{proof}
\begin{remark}
	Lemma \ref{mainl} shows that the behavior of the error vector $\mathbf{V}_{k}$ relies on the transition matrix $\mathbf{A}$ as well as the additive error terms due to the event-triggered communication.  \end{remark}

The following theorem shows the convergence properties of  ETC-DNES.
\begin{theorem}\label{th1}
	Let Assumptions \ref{asp3}--\ref{thres} hold. When $\gamma=\mu_F/L_F^2$,  the scaling parameter $\alpha$ and the gradient stepsize $\eta$ satisfy 

	\begin{equation*}\label{pa2}
	\alpha \in\left(0,\frac{1}{r}\right],\quad 	\eta\le \min\Big\{\frac{2n}{\mu_r}\sqrt{\frac{1-c_x}{m_1}},\sqrt{\frac{\tilde \lambda_{\text{min}}(I-W)}{2L_m}}, \frac{\mu_r}{6nL_m}\Big\},
	\end{equation*}
	where $m_1=\frac{4c_2c_3}{||I-W||_{\text{F}}^4}+\frac{1}{4||I-W||_{\text{F}}^2}+\frac{c_4}{||I-W||_{\text{F}}^4}$, and  the spectral radius of $\mathbf{A}$ is less than $1$. Thus, the NE-seeking error $\mathbb{E}_{\xi}[||\mathbf{ X}^{k+1}-\mathbf{X}^\star||^2_{\text{F}}]$ and the compression error $\mathbb{E}_{\xi}[||\mathbf{X}^{k+1}-\mathbf{H}^{k+1}||^2_{\text{F}}]$ both
 converge to $0$ under ETC-DNES. \end{theorem}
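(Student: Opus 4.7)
The plan is to prove the theorem in two stages: first verify that the prescribed parameter choices force $\rho(\mathbf{A}) < 1$, then deduce convergence of $\mathbf{V}_k$ from the linear recursion in Lemma \ref{mainl} combined with the triggering-error decay from Lemma \ref{lme}.

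\textbf{Stage 1 (spectral radius bound).} Since $\mathbf{A}$ is a $2\times2$ entrywise nonnegative matrix, $\rho(\mathbf{A})<1$ is equivalent to $I-\mathbf{A}$ being a nonsingular M-matrix, i.e.\ to the three inequalities $a_{11}<1$, $a_{22}<1$, and $(1-a_{11})(1-a_{22})>a_{12}a_{21}$. Substituting $\gamma=\mu_F/L_F^2$ and the explicit value of $c_1$ from Lemma \ref{mainl} gives
\begin{equation*}
a_{11}=c_1\!\left(1-\tfrac{\mu_F^2}{L_F^2}\right)=\tfrac{2L_F^2-\mu_F^2}{2L_F^2}=1-\tfrac{\mu_F^2}{2L_F^2},
\end{equation*}
so the first inequality is automatic. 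For $a_{22}=c_x+c_4\gamma^2$, the bound $a_{22}<1$ reduces to $c_4\mu_F^2/L_F^4 < \alpha r\delta/2$, and the determinant inequality, after cancelling $\mu_F^2/(2L_F^2)$, takes the form $\alpha r\delta/2>(2c_2c_3+c_4)\mu_F^2/L_F^4$, which is strictly stronger than $a_{22}<1$. Using Lemma \ref{lmmu}, the secondary constraints $\eta\le \sqrt{\tilde\lambda_{\min}(I-W)/(2L_m)}$ and $\eta\le \mu_r/(6nL_m)$ are precisely what is needed to guarantee $\mu_F=b_1=\eta\mu_r/(2n)$, after which the determinant inequality becomes an explicit upper bound on $\eta$; the quantity $m_1$ collects the resulting coefficients $c_2,c_3,c_4$ and the appearance of $\|I-W\|_{\text{F}}$ in $L_F$, and the stated bound $\eta\le (2n/\mu_r)\sqrt{(1-c_x)/m_1}$ is exactly what makes the inequality hold.

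\textbf{Stage 2 (propagation of the recursion).} Iterating Lemma \ref{mainl} yields, componentwise,
\begin{equation*}
\mathbf{V}_{k+1}\le \mathbf{A}^{k+1}\mathbf{V}_0+\sum_{r=0}^{k}\mathbf{A}^{k-r}\mathbf{B}\,\|\mathbf{E}_r\|_{\text{F}}^2.
\end{equation*}
Since $\rho(\mathbf{A})<1$, pick any $\beta\in(\rho(\mathbf{A}),1)$ and use Gelfand's formula to get a norm in which $\|\mathbf{A}^k\|\le M\beta^k$ for some finite $M$. The first term then decays geometrically to $\mathbf{0}$, while each coordinate of the second term is bounded by $M\|\mathbf{B}\|\sum_{r=0}^k\beta^{k-r}\|\mathbf{E}_r\|_{\text{F}}^2$, which tends to $0$ by Lemma \ref{lme} (applied with this $\beta$). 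Nonnegativity of the components of $\mathbf{V}_k$ then yields $\mathbb{E}_{\xi}[\|\mathbf{X}_{k+1}-\mathbf{X}^\star\|_{\text{F}}^2]\to 0$ and $\mathbb{E}_{\xi}[\|\mathbf{X}_{k+1}-\mathbf{H}_{k+1}\|_{\text{F}}^2]\to 0$.

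\textbf{Main obstacle.} The delicate part is the algebraic verification in Stage 1: the entries of $\mathbf{A}$ are coupled through $\gamma=\mu_F/L_F^2$, while both $L_F=\eta L_m+\|I-W\|_{\text{F}}$ and $\mu_F=\min\{b_1,b_2\}$ depend nontrivially on $\eta$ via Lemma \ref{lmmu}. Tracking these dependencies to isolate a clean sufficient condition on $\eta$ of the stated form, and simultaneously ensuring $\mu_F=b_1$ (so that $\mu_F$ is linear in $\eta$), is what drives the rather intricate bound $\eta\le \min\{(2n/\mu_r)\sqrt{(1-c_x)/m_1},\,\sqrt{\tilde\lambda_{\min}(I-W)/(2L_m)},\,\mu_r/(6nL_m)\}$; once this is in place, Stage 2 is essentially bookkeeping on top of Lemmas \ref{lme} and \ref{mainl}.
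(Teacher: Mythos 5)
Your proposal is correct and follows the same two-stage skeleton as the paper: certify $\rho(\mathbf{A})<1$ under the stated parameter choices, then iterate the recursion of Lemma \ref{mainl} and kill the convolution term with Lemma \ref{lme}. The differences are only in the linear-algebra certificates. For Stage 1 the paper invokes Corollary 8.1.29 of Horn--Johnson, exhibiting a positive vector $\epsilon$ with $\mathbf{A}\epsilon\le(1-\tfrac{\mu_F^2}{4L_F^2})\epsilon$ via the choice $\epsilon_1=\tfrac{4c_2}{L_F^2}\epsilon_2$, which is how the specific constant $m_1$ (with the extra $\tfrac{1}{4\|I-W\|_{\text{F}}^2}$ term and the substitution $L_F^2>\|I-W\|_{\text{F}}^2$) arises; your M-matrix/determinant test gives the exact condition $(2c_2c_3+c_4)\mu_F^2/L_F^4<1-c_x$, which is strictly weaker than $m_1\mu_F^2\le 1-c_x$, so the stated $\eta$-bound indeed implies it — but you assert rather than carry out this last comparison, so you should make explicit that $m_1\ge(2c_2c_3+c_4)/L_F^4$ (plus $\mu_F>0$) closes the loop. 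For Stage 2 the paper uses primitivity of $\mathbf{A}$ to get $\mathbf{A}^k\le C_A\rho_A^k\mathbf{1}\mathbf{1}^\top$ and applies Lemma \ref{lme} with $\beta=\rho_A$; your Gelfand-norm bound with $\beta\in(\rho(\mathbf{A}),1)$ is equally valid and in fact more robust, since the paper's claim that $\mathbf{A}^2$ is positive degenerates when $C=0$ ($c_2=c_4=0$ makes $\mathbf{A}$ lower triangular).

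One misstatement to fix: the constraints $\eta\le\sqrt{\tilde\lambda_{\min}(I-W)/(2L_m)}$ and $\eta\le\mu_r/(6nL_m)$ do \emph{not} guarantee $\mu_F=b_1$; they guarantee $\beta>1$ and hence $b_2>0$, so that $\mu_F=\min\{b_1,b_2\}>0$ (Lemma \ref{lmmu}). Whether $\mu_F$ equals $b_1$ or $b_2$ is irrelevant: the argument only needs $\mu_F\le b_1=\eta\mu_r/(2n)$ (which always holds) to convert the smallness condition on $\mu_F$ into the first bound on $\eta$, together with $\mu_F>0$ for strictness. With that correction your derivation matches the paper's conclusion.
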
 
 \begin{proof}
 	Given  $\gamma=\mu_F/L_F^2$,  we have $a_{11}=c_1(1+L_F^2\gamma^2-2\mu_F\gamma)=(1-\mu_F^2/2L_F^2)<1, a_{12}=\frac{\mu_F^2 c_2}{L_F^4}, a_{21}=\frac{c_3\mu_F^2}{L_F^2}$ and $a_{22}=c_x+c_4\frac{\mu_F^2}{L_F^4}$.
 	
 	Since $0<\alpha\le \frac{1}{r}$, Lemma \ref{mainl} holds for ETC-DNES. To establish the convergence of ETC-DNES, it is essential to determine the range of the stepsize that ensures $\rho(\mathbf{A})<1$. Based on Corollary 8.1.29 in \cite{horn2012matrix}, it can be seen  that $\rho(\mathbf{A})<1$ if there exists a positive vector $\mathbf{\epsilon}:=[\epsilon_1,\epsilon_2]^\top \in \mathbb{R}_{++}^2$ for which the inequality
 	\begin{equation}\label{pr1}
 		\mathbf{A\epsilon}\le (1-\frac{\mu_F^2}{4L_F^2})\mathbf{\epsilon}
 	\end{equation}
 	is satisfied. 
 	
 	Since 	$\mathbf{A\epsilon}=[a_{11}\epsilon_1+a_{12}\epsilon_2, a_{21}\epsilon_1+a_{22}\epsilon_2]^\top$, inequality \eqref{pr1} is equivalent to 
 	\begin{subequations}
 		\begin{align}\label{ine:rho1}
 			&(1-\frac{\mu_F^2}{2L_F^2})\epsilon_1+\frac{\mu_F^2 c_2}{L_F^4}\epsilon_2 \leq (1-\frac{\mu_F^2}{4L_F^2})\epsilon_1,\\\label{ine:rho2}
 			&\frac{c_3\mu_F^2}{L_F^2}\epsilon_1+(c_x+c_4\frac{\mu_F^2}{L_F^4})\epsilon_2\leq (1-\frac{\mu_F^2}{4L_F^2})\epsilon_2.
 		\end{align}
 	\end{subequations}
 	
 	Next, we derive the First inequalities  \eqref{ine:rho1} and  \eqref{ine:rho2}. 
 	
 	First, it can be easily seen that inequality \eqref{ine:rho1} holds if $$\frac{4c_2}{L_F^2}\epsilon_2=\epsilon_1.$$

 	Moreover, given $\frac{4c_2}{L_F^2}\epsilon_2=\epsilon_1$, inequality \eqref{ine:rho2}  is equivalent to
 	\begin{equation}
 		\frac{4c_2c_3\mu_F^2}{L_F^4}\epsilon_2+(\frac{\mu_F^2}{4L_F^2}+c_4\frac{\mu_F^2}{L_F^4})\epsilon_2\leq (1-c_x)\epsilon_2.
 	\end{equation}
 	
 	Note that $L_F^2>||I-W||_{\text{F}}^2$ and denote 
 	\begin{equation}\label{m1}
 		m_1=\frac{4c_2c_3}{||I-W||_{\text{F}}^4}+\frac{1}{4||I-W||_{\text{F}}^2}+\frac{c_4}{||I-W||_{\text{F}}^4}.
 	\end{equation} 
 	
 	Inequality \eqref{ine:rho2} can be verified when
 	\begin{equation}\label{muf}
 		\mu_F\le \sqrt{\frac{1-c_x}{m_1}}.
 	\end{equation}

 	Since $\mu_F=\min\{b_1,b_2\}>0$, the inequality \eqref{muf} is guaranteed by
 	\begin{equation}\label{ine:b}
 		\begin{aligned}
 			&b_1=\eta\mu_r/2n\le \sqrt{\frac{1-c_x}{m_1}}, \\
 			&b_2=(\beta^2\tilde \lambda_{\text{min}}(I-W)/(\beta^2+1))-\eta^2 L_m>0.
 		\end{aligned}
 	\end{equation}
 	
 	Based on 	Lemma \ref{lmmu}, we have $\beta^2+2\beta=
 	\frac{\mu_r}{2n\eta L_m}$. Then, it can be derived that $\beta>1$ when $\frac{\mu_r}{2n\eta L_m}>3$, i.e., $\beta>1$ when $\eta<\frac{\mu_r}{6nL_m}$.
 	
 	Hence, the sufficient condition for inequality \eqref{ine:b} is
 	\begin{equation}\label{m2}
 		\eta\le \min\Big\{\frac{2n}{\mu_r}\sqrt{\frac{1-c_x}{m_1}},\sqrt{\frac{\tilde \lambda_{\text{min}}(I-W)}{2L_m}}, \frac{\mu_r}{6nL_m}\Big\}.
 	\end{equation}

 	To wrap up, if the positive constants $\epsilon_1, \epsilon_2$ and the stepsize $\eta$ satisfy the following conditions, 
 	\begin{equation}
 		\begin{split}
 			&\epsilon_1=\frac{4c_2}{L_F^2}\epsilon_2,\quad \epsilon_2>0,\\
 			&	\eta\le \min\Big\{\frac{2n}{\mu_r}\sqrt{\frac{1-c_x}{m_1}},\sqrt{\frac{\tilde \lambda_{\text{min}}(I-W)}{2L_m}}, \frac{\mu_r}{6nL_m}\Big\},
 		\end{split}
 	\end{equation}
 	the linear system of  element-wise inequalities in \eqref{pr1} can be shown.

According to the above results, i.e., $\rho(\mathbf{A})<1$, we continue to establish the remaining proof of convergence of ETC-DNES. We recursively iterate \eqref{inequlity} as follows:
\begin{equation}\label{v1}
	\mathbf{V}_k \le \mathbf{A}^k \mathbf{V}_0+\sum\limits_{r=0}^{k-1}  \mathbf{A}^{k-r-1}\mathbf{B}||\mathbf{E}_r||^2_{\text{F}}. 
\end{equation}
Since $ \mathbf{A}^2$ is a positive matrix, $ \mathbf{A}$ is primitive (Theorem 8.5.2 in \cite{horn2012matrix}). Then each element of $ \mathbf{A}^k$ decreases at the rate of $\rho( \mathbf{A})$ (Theorem 8.5.1 in \cite{horn2012matrix}), i.e.,
\begin{equation}\label{v2}
	\mathbf{A}^k\le C_A \rho^k_ {A}\mathbf{1}\mathbf{1}^\top,
\end{equation} 
where $C_A$ is a positive constant.

Hence, we have
\begin{equation}\label{v3}
	\begin{aligned}
		&	\sum\limits_{r=0}^{k-1}  \mathbf{A}^{k-r-1}\mathbf{B}||\mathbf{E}_r||^2_{\text{F}}\leq \sum\limits_{r=0}^{k-1}  C_A \rho^{k-r-1}_{A}\mathbf{1}\mathbf{1}^\top \mathbf{B}||\mathbf{E}_r||^2_{\text{F}}\\
		&=(c_2+c_4)\gamma^2 C_A\sum\limits_{r=0}^{k-1}\rho^{k-r-1}_{A}||\mathbf{E}_r||^2_{\text{F}}\mathbf{1}. 
	\end{aligned}
\end{equation}
Substituting \eqref{v2} and \eqref{v3} in \eqref{v1} gives
\begin{equation}\label{v4}
	\mathbf{V}_k\le C_A\big( \rho^k_ {A}\mathbf{1}^\top\mathbf{V}_0+(c_2+c_4)\gamma^2\sum\limits_{r=0}^{k-1}\rho^{k-r-1}_{A}||\mathbf{E}_r||^2_{\text{F}}\big)\mathbf{1}. 
\end{equation}

With Lemma \ref{lme}, we have 
$
\lim\limits_{k\rightarrow \infty}	\sum\limits_{r=0}^{k-1}\rho^{k-r-1}_{A}||\mathbf{E}_r||^2_{\text{F}}=0.
$
Thus, we have $\lim\limits_{k\rightarrow \infty} \mathbf{V}_k=0$, which completes the proof. 
 \end{proof}

\begin{corollary}\label{co1}
Suppose Assumptions \ref{asp3}--\ref{thres} hold and the algorithm parameters satisfy \eqref{pa2}.
	If the threshold sequence is chosen in an exponential form, i.e., $\tau_{i,k}=C_i p_i^k$ with $C_i>0$ and $0<p_i<1$, the error vector $\mathbf{V}_k$ converges to $0$ linearly under Algorithm \ref{alg1}. 
	\end{corollary}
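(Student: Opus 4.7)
The plan is to pick up directly from inequality \eqref{v4} in the proof of Theorem \ref{th1}, which already provides a bound on $\mathbf{V}_k$ in terms of $\rho(\mathbf{A})^k$ and a convolution sum involving $\|\mathbf{E}_r\|_{\text{F}}^2$. Since Theorem \ref{th1} guarantees $\rho(\mathbf{A})<1$ whenever the stepsize condition \eqref{pa2} holds, the task reduces to showing that the convolution $\sum_{r=0}^{k-1}\rho(\mathbf{A})^{k-r-1}\|\mathbf{E}_r\|_{\text{F}}^2$ itself decays geometrically once the triggering thresholds are exponentially decaying.

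First, I will reuse the bound $\|\mathbf{e}_{i,k}\|\le \tau_{i,k}$ established at the beginning of the proof of Lemma \ref{lme}, so that
\begin{equation*}
\|\mathbf{E}_k\|_{\text{F}}^2 \le \sum_{i=1}^n \tau_{i,k}^2 = \sum_{i=1}^n C_i^2 p_i^{2k} \le D\,q^k,
\end{equation*}
where $D=\sum_{i=1}^n C_i^2$ and $q=\max_i p_i^2 \in (0,1)$. Substituting this bound into \eqref{v4} yields
\begin{equation*}
\mathbf{V}_k \le C_A\Bigl(\rho(\mathbf{A})^k \mathbf{1}^\top \mathbf{V}_0 + (c_2+c_4)\gamma^2 D \sum_{r=0}^{k-1}\rho(\mathbf{A})^{k-r-1} q^r\Bigr)\mathbf{1}.
\end{equation*}

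Next I would handle the geometric convolution. Setting $\rho:=\rho(\mathbf{A})$, a direct computation gives $\sum_{r=0}^{k-1}\rho^{k-r-1}q^r = \frac{q^k-\rho^k}{q-\rho}$ when $q\neq \rho$, which is $O(\bar\rho^{\,k})$ with $\bar\rho:=\max\{q,\rho\}<1$; in the degenerate case $q=\rho$ the sum equals $k\rho^{k-1}$, which is $O((\rho+\varepsilon)^k)$ for any $\varepsilon>0$ with $\rho+\varepsilon<1$. In either case, combining with the first term $\rho^k \mathbf{1}^\top \mathbf{V}_0$ gives a bound of the form $\mathbf{V}_k \preceq \tilde C\,\bar\rho^{\,k}\mathbf{1}$ for some constant $\tilde C>0$ and some $\bar\rho\in(0,1)$, which is exactly linear convergence.

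The only real subtlety is the $q=\rho$ tie case, which strictly speaking yields a $k\rho^{k-1}$ factor; I would dispose of it by absorbing the polynomial factor into an arbitrarily small enlargement of the rate, so the final convergence rate is $\max\{\rho(\mathbf{A}),p_{\max}^2\}$ (plus an arbitrarily small slack if equality holds). Everything else is routine substitution into the already-proved bound \eqref{v4}, so no new structural argument about the algorithm or the transition matrix $\mathbf{A}$ is required beyond what Theorem \ref{th1} has already established.
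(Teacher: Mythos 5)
Your proposal is correct and follows essentially the same route as the paper: bound $\|\mathbf{E}_k\|_{\text{F}}^2$ by the exponentially decaying thresholds via $\|\mathbf{e}_{i,k}\|\le\tau_{i,k}$, substitute into \eqref{v4}, and show the geometric convolution with $\rho(\mathbf{A})^{k-r-1}$ decays at rate $\max\{\rho(\mathbf{A}),\bar p\,\}$ (the paper invokes a cited lemma for this convolution bound, whereas you compute the geometric sum explicitly and handle the tie case by a small rate enlargement, which is a harmless and slightly more self-contained variant). No gap.
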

\begin{proof}
Since $\tau_{i,k}=C_i p_i^k$, we can obtain $||\mathbf{E}_k||^2_{\text{F}}\le \sum_i^n \tau_{i,k}^2\le C\bar p^{2k}$, where $C=\max_i C_i$, $\bar p=\max_i p_i$.  Then equation \eqref{v4} can be written as
\begin{equation*}
	\begin{aligned}
		\mathbf{V}_k&\le C_A\big( \rho^k_ {A}\mathbf{1}^\top\mathbf{V}_0+(c_2+c_4)\gamma^2 C\sum\limits_{r=0}^{k-1}\rho^{k-r-1}_{A}\bar p^{2r}\big)\mathbf{1}\\
		&\le C_A\big( \rho^k_ {A}\mathbf{1}^\top\mathbf{V}_0+(c_2+c_4)\gamma^2 C\sum\limits_{r=0}^{k-1}\rho^{k-r-1}_{A}\bar p^{r}\big)\mathbf{1}\\
		&\le  C_A\big( \mathbf{1}^\top\mathbf{V}_0m^k+(c_2+c_4)\gamma^2 Cm^{k-1}\big)\mathbf{1},
	\end{aligned}
\end{equation*}
where  $0<\max\{\rho_ {A}, \bar p\}<m<1$ and the last inequality can be obtained from Lemma 2 in \cite{hayashi2018event}.

Hence, the vector $\mathbf{V}_k$ decreases at the linear rate $\mathcal{O}(m^k)$, where $0<m<1$, which implies that the error vector converges to $0$ linearly.  
\end{proof}

\begin{remark}
Theorem \ref{th1} demonstrates the asymptotic convergence of each agent's state to the  NE. In the compression-based method \cite{chen2022linear} without event-triggered mechanism, the convergence rate is solely dependent on the spectral radius $\rho(\mathbf{A})$. However, in our proposed algorithm that integrates both event-triggering and compression, the upper bound of the error vector $\mathbf{V}(k)$ in \eqref{inequlity} indicates that the convergence rate depends on both the threshold of the triggering condition and the spectral radius $\rho(\mathbf{A})$. In general, the convergence rate of the event-triggered algorithm is slower compared to its non-event-triggered counterpart due to reduced information exchange. Nevertheless, Corollary \ref{co1} reveals that with a suitably chosen triggering condition, the event-triggered algorithm can achieve linear convergence to the optimal solution. Therefore, by judiciously selecting the trigger condition's threshold, the proposed event-triggered method effectively reduces communication load without significantly impacting the convergence rate in terms of its order.
\end{remark}

\begin{remark}
	All the above results can be adopted for games with different dimensions of the action sets.
\end{remark}

\section{Stochastic Event-triggered and Compressed Distributed NE Seeking Algorithm  (SETC-DNES) }
In this section, we introduce a  stochastic event-triggered mechanism for  distributed NE seeking. This mechanism incorporates a random variable into the event-triggering condition, resulting in different probabilities of triggering events based on local information. We demonstrate that the proposed algorithm (SETC-DNES)  achieves linear convergence to the NE.
\begin{algorithm}[t]
\caption{Stochastic Event-triggering Condition Calculating Algorithm}
\label{alg_set}
 {\bf Input:} %
Time iteration $k$, state value $\mathbf{q}_{i,k}, \mathbf{\tilde q}_{i,k-1}, \mathbf{x}_{(i),k}, \mathbf{h}_{i,k}$,  event-trigger parameter $\kappa>1$, $0<a<1$  \\
{\bf Output: $\mathbb{I}_{i,k}$}
\begin{algorithmic}[1]
\If {$k=0$,}
\State $\mathbb{I}_{i,k}=1$
\Else
\State Extract $\zeta_{i,k} \in(a,1)$ from an  arbitrary stationary ergodic \hspace*{10mm}random process

\If {$\zeta_{i,k}>\kappa \exp(\frac{-||\mathbf{q}_{i,k}-\mathbf{\tilde q}_{i,k-1}||}{||\mathbf{x}_{(i),k}-\mathbf{h}_{i,k}||})$},
 \State $\mathbb{I}_{i,k}=1$ \Else  \State $\mathbb{I}_{i,k}=0$
 \EndIf 
\EndIf
 \end{algorithmic}
\end{algorithm}

\subsection{Algorithm Development}
In this subsection, we extend the deterministic event-triggered mechanism from Section \ref{det} to a stochastic one. Given that SETC-DNES and ETC-DNES only differ  in the triggering condition, we will not provide a detailed overview of SETC-DNES. Instead, we present Algorithm \ref{alg_set} to illustrate the stochastic event-triggered mechanism.

\textbf{Stochastic Event-triggered Mechanism:}  In Algorithm \ref{alg_set}, a novel stochastic event-triggered mechanism is proposed.  The intuition behind the proposed stochastic event-triggered method is to extend the deterministic one in Section \ref{det} by assigning a probability for each event, rather than triggering whenever the defined event occurs. 

Here is how it works: at each iteration $k\geq 1$, a random variable $\zeta_{i,k}$ is drawn from a stationary ergodic random process with an identical probability density function $f_{i,\zeta}$ for all agents $i\in \mathcal{V}$. Then, each agent $i\in \mathcal{V}$ evaluates the triggering condition (line 5 in Algorithm \ref{alg_set}). If the condition is met, the triggering indicator is set to $1$, and agent $i$ communicates the compressed value to all its neighbors. Otherwise, no communication takes place at that iteration.

Differing from the event-triggering condition in Algorithm \ref{alg_et}, which requires a threshold $\tau_{i,k}$, this approach dispenses with the need for $\tau_{i,k}$. Instead, agent $i$ calculates the ratio between $||\mathbf{q}_{i,k}-\mathbf{\tilde q}_{i,k-1}||$ and $||\mathbf{x}_{(i),k}-\mathbf{h}_{i,k}||$. Then, it compares $\zeta_{i,k}$ with $\kappa \exp(-||\mathbf{q}_{i,k}-\mathbf{\tilde q}_{i,k-1}||/||\mathbf{x}_{(i),k}-\mathbf{h}_{i,k}||)$ to evaluate the triggering condition. Crucially, as $||\mathbf{x}_{(i),k}-\mathbf{h}_{i,k}||$ diminishes over time (refer to Section \ref{analysis2} for further details), it's ensured that the probability of communication increases monotonically with $||\mathbf{q}_{i,k}-\mathbf{\tilde q}_{i,k-1}||$. This means that higher probabilities of communication are assigned to more urgent cases corresponding to larger values of $||\mathbf{q}_{i,k}-\mathbf{\tilde q}_{i,k-1}||$.  To provide an illustrative example, consider a scenario where $a=0.5$, and $\zeta_{i,k}$ follows a uniformly distributed random process. In this case, the probability of event occurrence, denoted as $P[\mathbb{I}_{i,k}=1]$, can be expressed as $0.5(1-\kappa \exp(-||\mathbf{q}_{i,k}-\mathbf{\tilde q}_{i,k-1}||/||\mathbf{x}_{(i),k}-\mathbf{h}_{i,k}||))$, i.e., $P[\mathbb{I}_{i,k}=1]$ increases with the value of $||\mathbf{q}_{i,k}-\mathbf{\tilde q}_{i,k-1}||$. In instances where $\zeta_{i,k}$ remains a positive constant, the stochastic scheme  transforms into a deterministic one.

%


Note that except for the calculation of the triggering indicator, the state update rule in SETC-DNES is identical to that in ETC-DNES. Consequently, the compact form of SETC-DNES remains the same as ETC-DNES, as shown in Equations \eqref{eq:alg1}.

\subsection{Convergence Analysis}\label{analysis2}
In this subsection, we analyze the convergence performance of SETC-DNES. We first have the following inequality  when $\mathbb{I}_{i,k}=0$ in SETC-DNES:
\begin{equation}\label{eq_e}
	||\mathbf{q}_{i,k}-\mathbf{\tilde q}_{i,k-1}||\le (||\mathbf{x}_{(i),k}-\mathbf{h}_{i,k}||) (\ln \kappa-\ln \zeta_{i,k}).
\end{equation}

Given that $a<\zeta_{i,k}<1$, we can establish an upper bound for the event error in SETC-DNES as follows:
\begin{equation}\label{eq_e1}
||\mathbf{E}_k||^2_{\text{F}}=\sum_i^n	||\mathbf{e}_{i,k}||^2\leq  \sum_i^n l^2(||\mathbf{x}_{(i),k}-\mathbf{h}_{i,k}||^2)=l^2||\mathbf{X}_k-\mathbf{H}_k||^2_{\text{F}},
\end{equation}
where $l=\ln \kappa-\ln a$.

We  now proceed to derive the following lemma, which will enable us to establish a linear system of inequalities for bounding NE-seeking error and compression error.
\begin{lemma}\label{mainl2}
 	Given Assumption \ref{asp3}--\ref{asp2}, when $\alpha \in(0,\frac{1}{r}]$, the following linear system of  component-wise inequalities holds  for SETC-DNES, 
\begin{equation}\label{inequlity1}
	\mathbf{V}_{k+1}\le \mathbf{C}\mathbf{V}_{k},
\end{equation}	
where the  elements of transition matrix $\mathbf{C} = [c_{ij}]$  are  given by
	
	\begin{equation}
		\mathbf{C}=\begin{bmatrix}
	c_1(1+L_F^2\gamma^2-2\mu_F\gamma)&c_5\gamma^2\\
	c_3\gamma^2 L_F^2&c_x+c_6\gamma^2	\end{bmatrix}
	\end{equation}
	with $c_5=c_2(l^2+1)$ and  $c_6=c_4(l^2+1)$.
 \end{lemma}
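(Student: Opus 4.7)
The plan is to reuse the two per-iteration bounds already derived in the proof of Lemma \ref{mainl} and to absorb the additive event-error term using the stochastic-triggering bound in \eqref{eq_e1}. Since SETC-DNES shares the compact update \eqref{eq:alg1} with ETC-DNES, the NE-seeking error recursion \eqref{nee} and the compression error recursion \eqref{cpe4} continue to hold verbatim under Assumptions \ref{asp3}--\ref{asp2} whenever $\alpha\in(0,1/r]$. What changes between the two algorithms is only the way the event error $\|\mathbf{E}_k\|_{\text{F}}^2$ can be controlled: in ETC-DNES it is bounded by the deterministic thresholds $\tau_{i,k}$ (hence it appears as a separate additive forcing term $\mathbf{B}\|\mathbf{E}_k\|_{\text{F}}^2$), whereas in SETC-DNES I can reinterpret it as being driven by $\|\mathbf{X}_k-\mathbf{H}_k\|_{\text{F}}^2$.

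The first step is to verify \eqref{eq_e1} in both triggering regimes. When $\mathbb{I}_{i,k}=1$, we set $\mathbf{\tilde q}_{i,k}=\mathbf{q}_{i,k}$, so $\mathbf{e}_{i,k}=\mathbf{0}$ and the bound is trivial. When $\mathbb{I}_{i,k}=0$, the failure of the condition in line 5 of Algorithm \ref{alg_set} yields $\zeta_{i,k}\le \kappa\exp(-\|\mathbf{q}_{i,k}-\mathbf{\tilde q}_{i,k-1}\|/\|\mathbf{x}_{(i),k}-\mathbf{h}_{i,k}\|)$, which after rearranging and using $\mathbf{\tilde q}_{i,k}=\mathbf{\tilde q}_{i,k-1}$ gives \eqref{eq_e}. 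Summing the resulting square bound over $i$ and using $\zeta_{i,k}\in(a,1)$ produces $\|\mathbf{E}_k\|_{\text{F}}^2\le l^2\|\mathbf{X}_k-\mathbf{H}_k\|_{\text{F}}^2$ with $l=\ln\kappa-\ln a$, which is exactly \eqref{eq_e1}.

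Next, I substitute this pointwise bound into \eqref{nee} and \eqref{cpe4}. In \eqref{nee} the additive term $c_2\gamma^2\|\mathbf{E}_k\|_{\text{F}}^2$ merges into the $\|\mathbf{X}_k-\mathbf{H}_k\|_{\text{F}}^2$ term, producing the effective coefficient $c_2(1+l^2)\gamma^2=c_5\gamma^2$. Similarly, in \eqref{cpe4} the additive term $c_4\gamma^2\|\mathbf{E}_k\|_{\text{F}}^2$ merges into the same term and yields the effective coefficient $c_x+c_4(1+l^2)\gamma^2=c_x+c_6\gamma^2$. Taking total expectations on both recursions and assembling them into the vector form for $\mathbf{V}_k$ gives precisely \eqref{inequlity1} with the claimed matrix $\mathbf{C}$.

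The only delicate point is step two: I need \eqref{eq_e1} to hold pathwise (not just in expectation) so that after taking expectation the inequality can still be combined linearly with \eqref{nee} and \eqref{cpe4}, which involve conditional expectations with respect to $\mathcal{F}_k$. Because the $\zeta_{i,k}$ lower bound $a$ is deterministic, the bound $\|\mathbf{E}_k\|_{\text{F}}^2\le l^2\|\mathbf{X}_k-\mathbf{H}_k\|_{\text{F}}^2$ is indeed pathwise, so substitution under the conditional expectation $\mathbb{E}_\xi[\cdot\mid\mathcal{F}_k]$ is legitimate, and the rest of the argument is a straightforward rewriting. This is the only conceptual step; once it is in place, the transition matrix $\mathbf{C}$ follows by direct comparison with $\mathbf{A}$ and $\mathbf{B}$ from Lemma \ref{mainl}.
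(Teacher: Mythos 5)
Your proposal is correct and follows essentially the same route as the paper: it reuses the recursions underlying Lemma \ref{mainl} and absorbs the event error via the pathwise bound \eqref{eq_e1}, folding $c_2\gamma^2\|\mathbf{E}_k\|_{\text{F}}^2$ and $c_4\gamma^2\|\mathbf{E}_k\|_{\text{F}}^2$ into the $\|\mathbf{X}_k-\mathbf{H}_k\|_{\text{F}}^2$ column to obtain $c_5$ and $c_6$. Your extra care in verifying \eqref{eq_e1} in both triggering regimes and in noting that the bound holds pathwise (so it can be used under the conditional expectation) only makes explicit steps the paper treats implicitly.
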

 
\begin{proof}
Since  SETC-DNES and  ETC-DNES differ only in their event-triggering methods, Lemma \ref{mainl} also applies to SETC-DNES.

Bringing inequality \eqref{eq_e1} into  inequality \eqref{inequlity}, we can obtain 
\begin{equation}
\begin{aligned}
	&\mathbf{V}_{k+1}\le  \mathbf{A}\mathbf{V}_{k}+l^2\mathbf{B}||\mathbf{X}_k-\mathbf{H}_k||^2_{\text{F}}\\
	&=\begin{bmatrix}
	c_1(1+L_F^2\gamma^2-2\mu_F\gamma)&c_2\gamma^2\\
	c_3\gamma^2 L_F^2&c_x+c_4\gamma^2\\
	\end{bmatrix}\mathbf{V}_{k}+
	\begin{bmatrix}
	0&l^2c_2\gamma^2\\
	0&l^2c_4\gamma^2\\
	\end{bmatrix}\mathbf{V}_{k}\\
	&=\begin{bmatrix}
	c_1(1+L_F^2\gamma^2-2\mu_F\gamma)&c_2\gamma^2+l^2 c_2\gamma^2\\
	c_3\gamma^2 L_F^2&c_x+c_4\gamma^2+l^2 c_4\gamma^2
	\end{bmatrix}\mathbf{V}_{k}\\
	&=\mathbf{C}\mathbf{V}_{k}.\\
\end{aligned}
\end{equation}	
\end{proof}

The following theorem shows the convergence properties of SETC-DNES.
\begin{theorem}\label{th2}
	Let Assumptions \ref{asp3}--\ref{asp2} hold. When  $\gamma=\mu_L/L_F^2$, the scaling parameter $\alpha$ and the gradient stepsize $\eta$ satisfy
		\begin{equation*}
		\alpha \in\left(0,\frac{1}{r}\right],\quad 	\eta\le \min\Big\{\frac{2n}{\mu_r}\sqrt{\frac{1-c_x}{m_2}},\sqrt{\frac{\tilde \lambda_{\text{min}}(I-W)}{2L_m}}, \frac{\mu_r}{6nL_m}\Big\},
	\end{equation*} where $	m_2=\frac{4c_3c_5}{||I-W||_{\text{F}}^4}+\frac{1}{4||I-W||_{\text{F}}^2}+\frac{c_6}{||I-W||_{\text{F}}^4}$, and the spectral radius of $\mathbf{C}$ is less than $1$.  Thus, the NE-seeking error $\mathbb{E}_{\xi}[||\mathbf{ X}^{k+1}-\mathbf{X}^\star||^2_{\text{F}}]$ and the compression error $\mathbb{E}_{\xi}[||\mathbf{X}^{k+1}-\mathbf{H}^{k+1}||^2_{\text{F}}]$ of SETC-DNES  both
 converge to $0$ at the linear rate $\mathcal{O}(\rho(\mathbf{C})^k)$.  \end{theorem}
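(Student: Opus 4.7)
The plan is to mirror the argument used for Theorem \ref{th1}, but to exploit the crucial simplification that Lemma \ref{mainl2} absorbs the event error into the linear recursion $\mathbf{V}_{k+1}\le\mathbf{C}\mathbf{V}_k$, so no additive residual of the form $\mathbf{B}\|\mathbf{E}_k\|_{\mathrm F}^2$ remains to be driven to zero separately. Consequently, once the spectral radius $\rho(\mathbf{C})<1$ is verified, iterating the componentwise inequality yields $\mathbf{V}_k\le \mathbf{C}^k\mathbf{V}_0$, and the linear rate $\mathcal{O}(\rho(\mathbf{C})^k)$ follows directly from primitivity of $\mathbf{C}$ (which has strictly positive entries, hence $\mathbf{C}^2>0$) by Theorems 8.5.1--8.5.2 of Horn--Johnson, exactly as in the derivation of \eqref{v2}.

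To establish $\rho(\mathbf{C})<1$, I would invoke Corollary 8.1.29 of Horn--Johnson and seek a strictly positive vector $\epsilon=[\epsilon_1,\epsilon_2]^\top$ together with the scalar $1-\mu_F^2/(4L_F^2)$ such that $\mathbf{C}\epsilon\le(1-\mu_F^2/(4L_F^2))\epsilon$. With the choice $\gamma=\mu_F/L_F^2$, the entries of $\mathbf{C}$ reduce to $c_{11}=1-\mu_F^2/(2L_F^2)$, $c_{12}=c_5\mu_F^2/L_F^4$, $c_{21}=c_3\mu_F^2/L_F^2$, and $c_{22}=c_x+c_6\mu_F^2/L_F^4$, so that the two componentwise inequalities are structurally identical to \eqref{ine:rho1}--\eqref{ine:rho2} with $c_2,c_4$ replaced by $c_5,c_6$. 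The first inequality is satisfied by setting $\epsilon_1=(4c_5/L_F^2)\epsilon_2$, and substituting this into the second, together with the bound $L_F^2>\|I-W\|_{\mathrm F}^2$, gives the sufficient condition $\mu_F\le\sqrt{(1-c_x)/m_2}$, where $m_2$ is precisely the constant defined in the theorem statement.

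Next, I would translate the constraint on $\mu_F$ into the stated constraint on $\eta$ using Lemma \ref{lmmu} and the identity $\mu_F=\min\{b_1,b_2\}$. Requiring $b_1=\eta\mu_r/(2n)\le\sqrt{(1-c_x)/m_2}$ produces the first bound; requiring $b_2>0$ together with the auxiliary condition $\beta>1$ (which in turn holds whenever $\eta<\mu_r/(6nL_m)$, as derived in the proof of Theorem \ref{th1}) yields the remaining two bounds $\eta\le\sqrt{\tilde\lambda_{\min}(I-W)/(2L_m)}$ and $\eta\le\mu_r/(6nL_m)$. Combining the three gives the stated stepsize constraint. With $\rho(\mathbf{C})<1$ secured, I would close the argument by writing $\mathbf{C}^k\le C_C\rho(\mathbf{C})^k\mathbf{1}\mathbf{1}^\top$ for some constant $C_C>0$, so that $\mathbf{V}_k\le C_C\rho(\mathbf{C})^k(\mathbf{1}^\top\mathbf{V}_0)\mathbf{1}$, which delivers the claimed linear decay of both $\mathbb{E}_\xi[\|\mathbf{X}_{k+1}-\mathbf{X}^\star\|_{\mathrm F}^2]$ and $\mathbb{E}_\xi[\|\mathbf{X}_{k+1}-\mathbf{H}_{k+1}\|_{\mathrm F}^2]$.

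The principal technical obstacle is the careful bookkeeping of the constants $c_5=c_2(l^2+1)$ and $c_6=c_4(l^2+1)$ with $l=\ln\kappa-\ln a$: one must check that the inflation by the factor $l^2+1$ (the price paid for the stochastic triggering bound \eqref{eq_e1}) is already accounted for in the definition of $m_2$, and that the resulting bound on $\eta$ is still nonvacuous for any admissible choice of $\kappa>1$ and $a\in(0,1)$. Beyond this, the linear-algebraic portion of the proof is routine given the parallel with Theorem \ref{th1}, and the absence of the decaying threshold sequence $\{\tau_{i,k}\}$ is precisely what allows us to upgrade the asymptotic conclusion of Theorem \ref{th1} to unconditional linear convergence here.
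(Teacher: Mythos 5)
Your proposal is correct and follows essentially the same route as the paper's own proof: verifying $\rho(\mathbf{C})<1$ via Corollary 8.1.29 of Horn--Johnson with the weight choice $\epsilon_1=(4c_5/L_F^2)\epsilon_2$, translating $\mu_F\le\sqrt{(1-c_x)/m_2}$ into the stated bound on $\eta$ through $\mu_F=\min\{b_1,b_2\}$ and $\beta>1$, and then iterating $\mathbf{V}_{k+1}\le\mathbf{C}\mathbf{V}_k$ to get the $\mathcal{O}(\rho(\mathbf{C})^k)$ rate. Your observation that Lemma \ref{mainl2} absorbs the event error so no separate decaying-residual argument is needed is exactly the simplification the paper exploits relative to Theorem \ref{th1}.
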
 
 \begin{proof}
 	Given  $\gamma=\mu_F/L_F^2$,  we have $c_{11}=c_1(1+L_F^2\gamma^2-2\mu_F\gamma)=(1-\mu_F^2/2L_F^2)<1, c_{12}=\frac{\mu_F^2 c_5}{L_F^4}, c_{21}=\frac{c_3\mu_F^2}{L_F^2}$ and $c_{22}=c_x+c_6\frac{\mu_F^2}{L_F^4}$.
 	
 {\color{black} Since $0<\alpha\leq \frac{1}{r}$, Lemma \ref{mainl2} holds for SETC-DNES.}  	To establish the convergence of SETC-DNES, it is essential to determine the range of the stepsize that ensures $\rho(\mathbf{C})<1$. Based on Corollary 8.1.29 in \cite{horn2012matrix}, it can be seen  that $\rho(\mathbf{C})<1$ if there exists a positive vector $\mathbf{\epsilon'}:=[\epsilon_3,\epsilon_4]^\top \in \mathbb{R}_{++}^2$ for which the inequality
 	\begin{equation}\label{pr2}
 		\mathbf{C\epsilon'}\le (1-\frac{\mu_F^2}{4L_F^2})\mathbf{\epsilon'}
 	\end{equation}
 	is satisfied. 
 	
 	Since 	$\mathbf{C\epsilon'}=[c_{11}\epsilon_3+c_{12}\epsilon_4, c_{21}\epsilon_3+c_{22}\epsilon_4]^\top$, inequality \eqref{pr2} is equivalent to 
 	\begin{subequations}
 		\begin{align}\label{ine:2rho1}
 			&(1-\frac{\mu_F^2}{2L_F^2})\epsilon_3+\frac{\mu_F^2 c_5}{L_F^4}\epsilon_4 \leq (1-\frac{\mu_F^2}{4L_F^2})\epsilon_3,\\\label{ine:2rho2}
 			&\frac{c_3\mu_F^2}{L_F^2}\epsilon_3+(c_x+c_6\frac{\mu_F^2}{L_F^4})\epsilon_4\leq (1-\frac{\mu_F^2}{4L_F^2})\epsilon_4.
 		\end{align}
 	\end{subequations}
 	
 	Next, we derive the conditions that lead to inequalities  \eqref{ine:2rho1} and  \eqref{ine:2rho2}. 
 	
 	First, it can be easily seen that inequality \eqref{ine:2rho1} holds if $$\frac{4c_5}{L_F^2}\epsilon_4=\epsilon_3.$$

 	Moreover, given $\frac{4c_5}{L_F^2}\epsilon_4=\epsilon_3$, inequality \eqref{ine:2rho2}  is equivalent to
 	\begin{equation}\label{ine:mu2}
 		\frac{4c_3c_5\mu_F^2}{L_F^4}\epsilon_3+(\frac{\mu_F^2}{4L_F^2}+c_6\frac{\mu_F^2}{L_F^4})\epsilon_4\leq (1-c_x)\epsilon_4.
 	\end{equation}
 	
 	Note that $L_F^2>||I-W||_{\text{F}}^2$ and denote 
 	\begin{equation}\label{m1}
 		m_2=\frac{4c_3c_5}{||I-W||_{\text{F}}^4}+\frac{1}{4||I-W||_{\text{F}}^2}+\frac{c_6}{||I-W||_{\text{F}}^4}.
 	\end{equation} 
 	
 	Inequality \eqref{ine:mu2} can be verified when
 	\begin{equation}\label{muf2}
 		\mu_F\le \sqrt{\frac{1-c_x}{m_2}}.
 	\end{equation}

 	Since $\mu_F=\min\{b_1,b_2\}>0$, the inequality \eqref{muf2} is guaranteed by
 	\begin{equation}\label{ine:2b}
 		\begin{aligned}
 			&b_1=\eta\mu_r/2n\le \sqrt{\frac{1-c_x}{m_2}}, \\
 			&b_2=(\beta^2\tilde \lambda_{\text{min}}(I-W)/(\beta^2+1))-\eta^2 L_m>0.
 		\end{aligned}
 	\end{equation}
 	
 	Based on 	Lemma \ref{lmmu}, we have $\beta^2+2\beta=
 	\frac{\mu_r}{2n\eta L_m}$. Then, it can be derived that $\beta>1$ when $\frac{\mu_r}{2n\eta L_m}>3$, i.e., $\beta>1$ when $\eta<\frac{\mu_r}{6nL_m}$.
 	
 	Hence, the sufficient condition for inequality \eqref{ine:2b} is
 	\begin{equation}\label{m2}
 		\eta\le \min\Big\{\frac{2n}{\mu_r}\sqrt{\frac{1-c_x}{m_1}},\sqrt{\frac{\tilde \lambda_{\text{min}}(I-W)}{2L_m}}, \frac{\mu_r}{6nL_m}\Big\}.
 	\end{equation}

 	To wrap up, if the positive constants $\epsilon_3, \epsilon_4$ and the stepsize $\eta$ satisfy the following conditions, 
 	\begin{equation}
 		\begin{split}
 			&\epsilon_3=\frac{4c_5}{L_F^2}\epsilon_4,\quad \epsilon_4>0,\\
 			&	\eta\le \min\Big\{\frac{2n}{\mu_r}\sqrt{\frac{1-c_x}{m_2}},\sqrt{\frac{\tilde \lambda_{\text{min}}(I-W)}{2L_m}}, \frac{\mu_r}{6nL_m}\Big\},
 		\end{split}
 	\end{equation}
 	the linear system of  element-wise inequalities in \eqref{pr2} can be shown and we can conclude that the NE-seeking error $\mathbb{E}_{\xi}[||\mathbf{ X}^{k+1}-\mathbf{X}^\star||^2_{\text{F}}]$ and the compression error $\mathbb{E}_{\xi}[||\mathbf{X}^{k+1}-\mathbf{H}^{k+1}||^2_{\text{F}}]$  both
		 converge to $0$ at the linear rate $\mathcal{O}(\rho(\mathbf{C})^k)$, where $\rho(\mathbf{C})<1$.
		
		\end{proof}

\begin{remark}
Theorem \ref{th2} establishes the linear convergence of the proposed SETC-DNES algorithm. As indicated in \eqref{eq_e1}, the event error $\mathbf{E}_k$ is bounded by the differences between $\mathbf{X}_k$ and $\mathbf{H}_k$. Consequently, the convergence rate is independent of any specific assumptions regarding the event-triggering condition and relies solely on the spectral radius $\rho(\mathbf{C})$. Therefore, Theorem \ref{th2} effectively demonstrates the well-posedness of the proposed SETC-DNES algorithm.
\end{remark}

\section{SIMULATIONS}
In this section, we provide numerical results to validate our theoretical results and compare our proposed communication-efficient algorithms with existing methods.

\textbf{Network.} We consider a randomly generated directed communication network with $n=50$ agents. The weight matrix $W$ is

\begin{equation}
    [W]_{ij}=
   \begin{cases}
   1/(\max_{i \in V} \mathcal{|N|}_i),&\mbox{if $j \in \mathcal{N}_i^{-}$;}\\
   1-\sum_{j  \in \mathcal{N}_i^{-}}w_{ij},&\mbox{if $j=i$;}\\
   0,&\mbox{Otherwise.}
   \end{cases}
  \end{equation}

\textbf{Task.} The connectivity control game defined in \cite{chen2022linear} is considered, where the sensors in the network try to find a tradeoff between the local objective, e.g., source seeking and positioning and the global objective, e.g., maintaining connectivity with other sensors. The cost function of sensor $i$ is defined as $J_i(\mathbf{x})=l_i^c(x_i)+l_i^g(\mathbf{x})$,
	where $l_i^c(x_i)=x_i^\top r_{ii}x_i+x_i^\top r_i+b_i$, $l_i^g(\mathbf{x})=\sum\limits_{j\in S_i} c_{ij}||x_i-x_j||^2$,  $S_i\subseteq V$ and $x_i=[x_{i1},x_{i2}]^\top\in\mathbb{R}^2$ denotes the position of sensor $i$, and $r_{ii},r_i, b_i, c_{ij}>0$ are constants. The parameters are set as $r_{ii}=\begin{bmatrix}
	i&0\\0&i
\end{bmatrix}, r_i=\begin{bmatrix}i&i \end{bmatrix}^\top, b_i=i$ and $c_{ij}=1,    \forall i,j \in V$. Furthermore, there is $S_i=\{i+1\}$ for $i\in\{1,2,\ldots,49\}$ and $S_{50}=\{1\}$.
The  unique Nash equilibrium of the game is $x^\star_{ij}=-0.5$ for $i\in\{1,2,\ldots,50\}, j\in\{1,2\}$. Meanwhile,  $\mathbf{x}_{(i),0}$ are randomly generated in $[0,1]^{100}$, $\mathbf{h}_{i,0}=\mathbf{0}$.    


\textbf{Compressor.} For information compression, we implement the stochastic quantization compressor  with $b=2$ and $q=\infty$ as  shown in  \cite{liu2021linear} for all agents at each iteration. As pointed out in \cite{zhang2023innovation}, transmitting this compressed message needs $(b+1)d+l$ bits if a scalar can be transmitted with $l$ bits while maintaining sufficient precision. In contrast, each agent sends $dl$ bits at each triggered  iteration when implementing  NE seeking algorithm without compression. In this simulation, we choose $l=32$.

\textbf{Algorithms.} We compare our proposed algorithms, ETC-DNES with various triggering thresholds and SETC-DNES, against several benchmarks: C-DNES \cite{chen2022linear}, the current state-of-the-art in compressed NE seeking; deterministic event-triggered NE seeking without compression (ET-NE) \cite{xu2021event}; {\color{black} and stochastic event-triggered NE seeking without compression (SET-NE) \cite{huo2023distributed}}. All hyper-parameters used in the experiments are manually tuned and detailed in Table \ref{table1}.
 \begin{table}[htp]
 \caption{Parameter settings for different algorithms}
 \label{table1}
\centering
\begin{tabularx}{0.49\textwidth}{ll}
\hline
Algorithm & Value \\ \hline
C-DNES  \cite{chen2022linear}      &  $\eta=0.01, \gamma=0.5 , \alpha=0.05$    \\
ET-NE \cite{xu2021event}       &  $\eta=0.01, \tau_k=10/k^{1.1}$    \\
SET-NE \cite{huo2023distributed}       &  $\eta=0.01, \zeta_{i,k} \sim U(0.05,1),\kappa=1.075$   \\
ETC-DNES-1       &  $\eta=0.01, \gamma=0.5 , \alpha=0.05,  \tau_k=10/k^{1.1}$  \\
ETC-DNES-2       &  $\eta=0.01, \gamma=0.5 , \alpha=0.05, \tau_k=50\times 0.99^k$  \\
SETC-DNES        &  $\eta=0.01, \gamma=0.5, \alpha=0.05,  $  \\
&$\zeta_{i,k} \sim U(0.5,1),\kappa=1.5$\\

\hline
\end{tabularx}
\end{table}

%
%

%
%

\begin{figure}[t]
	\begin{minipage}[t]{0.25\textwidth}
		\centering
		\includegraphics[width=\textwidth]{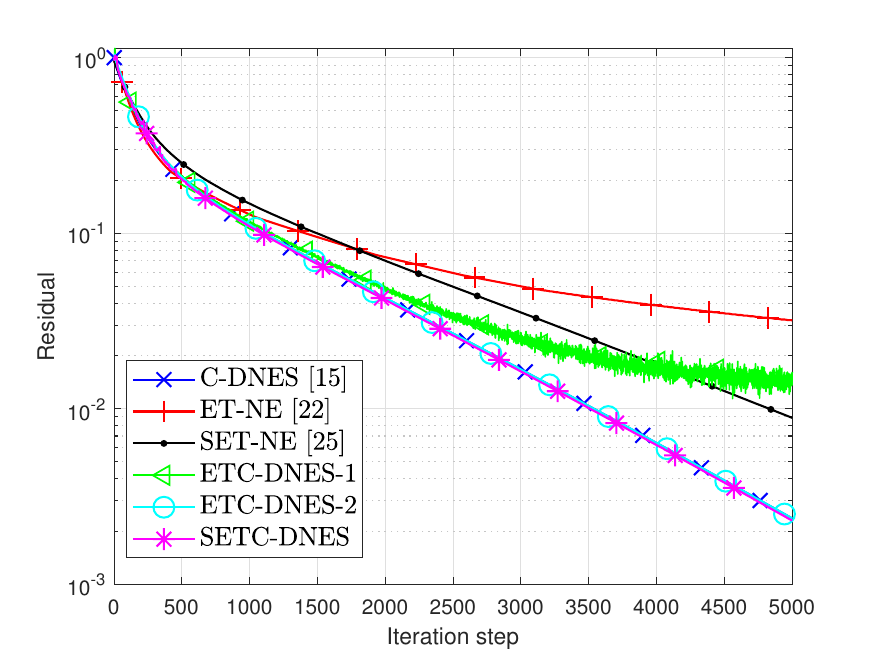}
		\centerline{(a)}
	\end{minipage}
	\begin{minipage}[t]{0.25\textwidth}
		\centering
		\includegraphics[width=\textwidth]{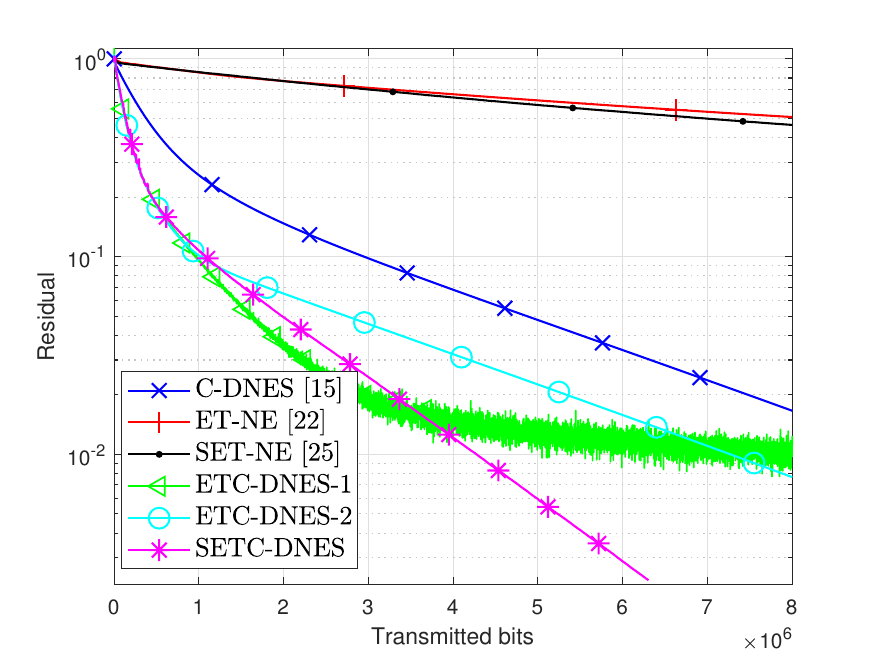}
		\centerline{(b)}
	\end{minipage}
	\caption{Evolution of Residual with respect to  (a) the number of
		iterations and (b) the number of
		transmitted bits.}	\label{simu}
\end{figure}

\begin{figure}[t]
	\begin{minipage}[htp]{0.25\textwidth}
		\centering
		\includegraphics[width=\textwidth]{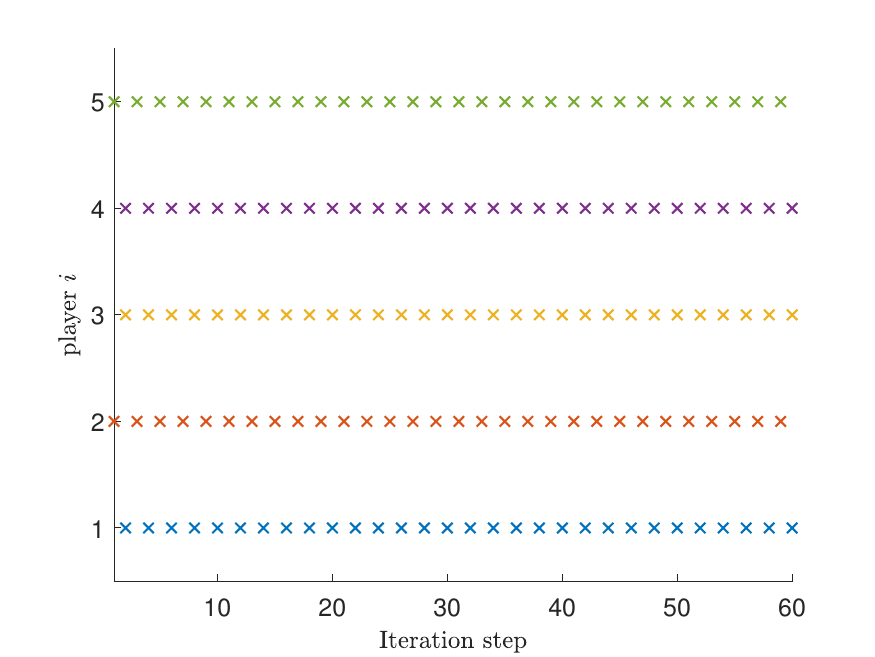}
		\centerline{(a)}
	\end{minipage}
	\begin{minipage}[htp]{0.25\textwidth}
		\centering
		\includegraphics[width=\textwidth]{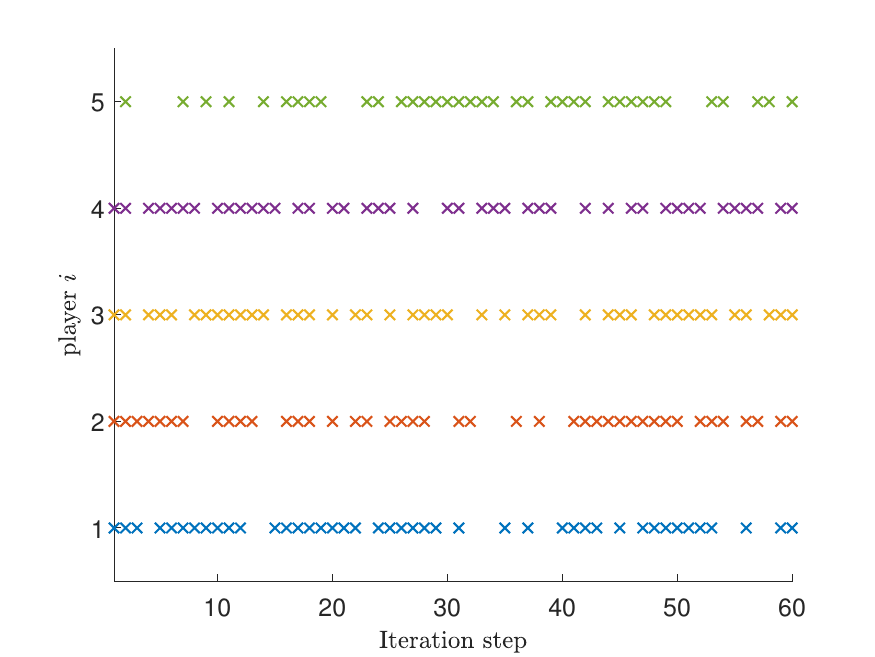}
		\centerline{(b)}
	\end{minipage}
	\begin{minipage}[htp]{0.25\textwidth}
		\centering
		\includegraphics[width=\textwidth]{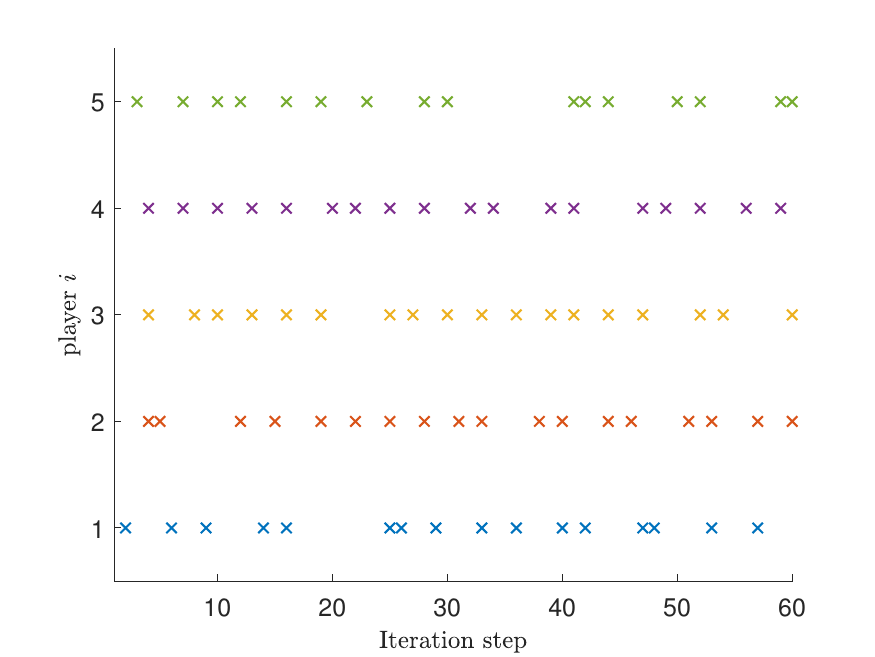}
		\centerline{(c)}
	\end{minipage}
	\begin{minipage}[htp]{0.25\textwidth}
		\centering
		\includegraphics[width=\textwidth]{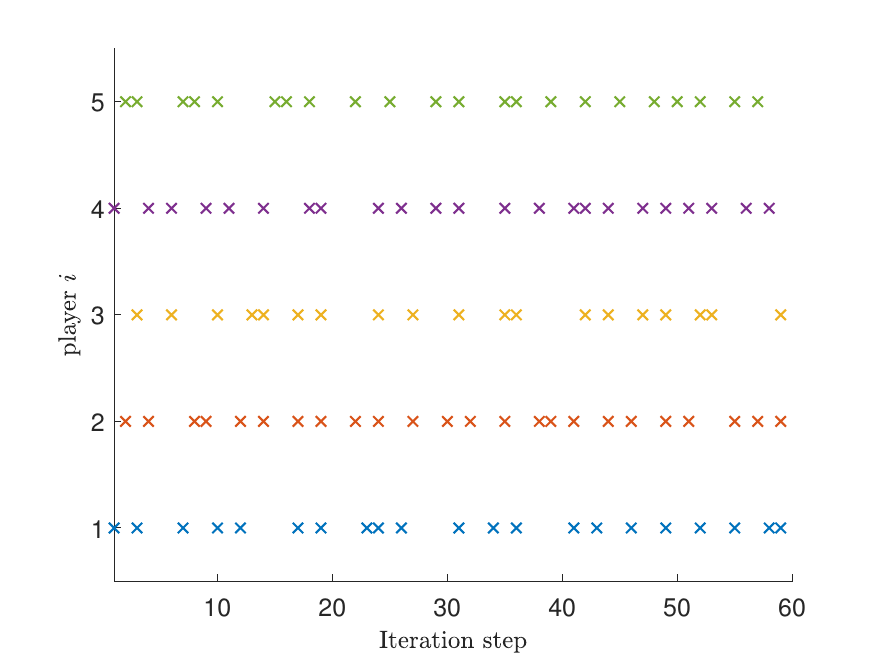}
		\centerline{(d)}
	\end{minipage}
	\begin{minipage}[htp]{0.25\textwidth}
		\centering
		\includegraphics[width=\textwidth]{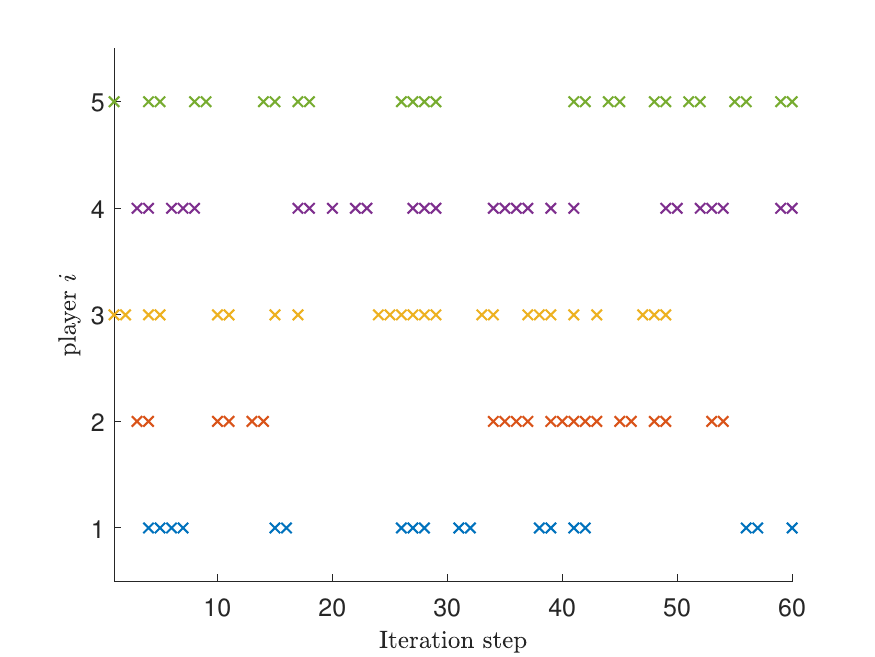}
		\centerline{(e)}
	\end{minipage}
	\begin{minipage}[htp]{0.25\textwidth}
		\centering
		\includegraphics[width=\textwidth]{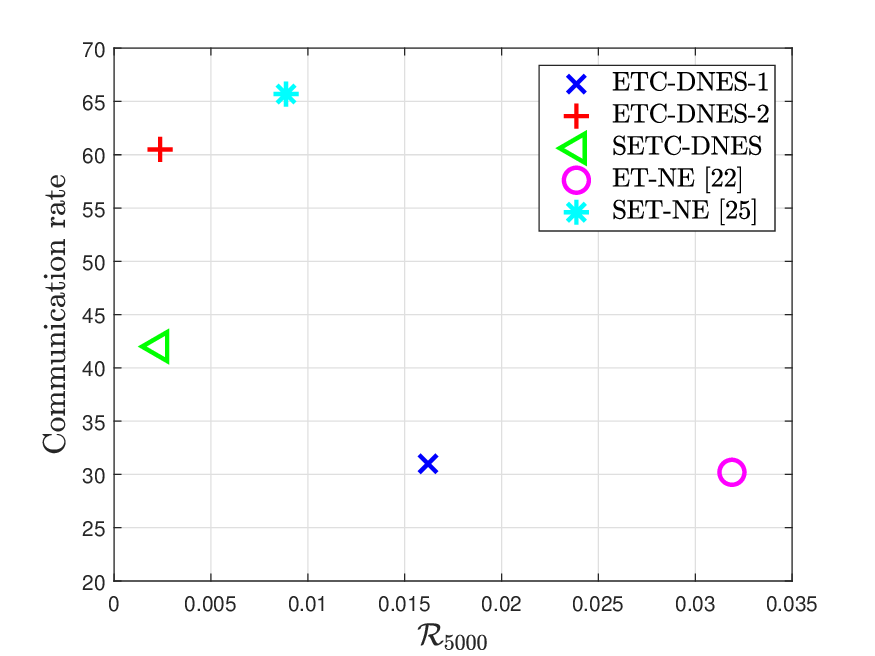}
		\centerline{(f)}
	\end{minipage}
	\caption{(a) Triggering instants for  ET-NE \cite{xu2021event}; (b) Triggering instants for SETC-NE \cite{huo2023distributed}; (c) Triggering instants for ETC-DNES-1; (d) Triggering instants for ETC-DNES-2; (e) Triggering instants for SETC-DNES; (f) A comparison on the residuals and communication rate (averaged trigger iterations/total iterations) by stopping the algorithms at $k = 5000$. }\label{instants}
	\centering
\end{figure}

\begin{figure}[htp]
	\centering
	\includegraphics[width=0.35\textwidth]{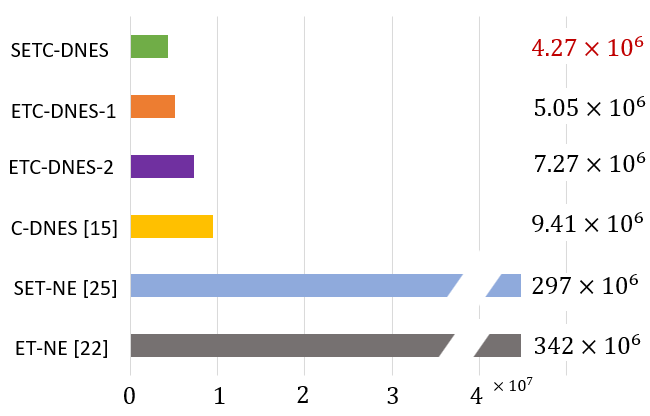}
	\caption{Total transmitted bits of different algorithms to obtain $\mathcal{R}\leq 0.01$.}
	\label{tot_comp}
\end{figure}

\textbf{Results.} We use the normalized residual $\mathcal{R}_k=\frac{||\mathbf{X}_k-\mathbf{X}^\star||_{\text{F}}}{||\mathbf{X}_0-\mathbf{X}^\star||_{\text{F}}}$ to measure the performance of each algorithm. We plot the convergence of the residual concerning both the number of iterations and the total bits transmitted for the various algorithm and triggering conditions combinations, all initialized with the same initial conditions. These results are shown in Fig. \ref{simu} (a) and (b), respectively. Moreover, the effectiveness of different triggering condition is illustrated in Fig. \ref{instants}   and a comparison of the total transmitted bits required by different algorithms to reach $\mathcal{R}\leq 0.01$ is provided in Fig. \ref{tot_comp}. Here are some key observations:

\begin{itemize}
	\item From Fig. \ref{simu} (a),  we can see that proposed communication-efficient algorithm with different triggering conditions all converge  to the NE   of the game. Moreover, ETC-DNES-2 and SETC-DNES  achieve  a similar  linear convergence rate as C-DNES without  event triggering \cite{chen2022linear}.
	\item   The effectiveness of different triggering conditions is depicted in Fig. \ref{instants} (a)--(e), clearly demonstrating a reduction in communication rounds. Regarding execution efficiency, a comprehensive comparison is provided in Fig. \ref{instants} (f). This comparison includes algorithm residuals and communication rates  for each algorithm at the time iteration $k=5000$. {\color{black} Notably, ETC-DNES-1 achieves a similar reduction in communication rounds as ET-NE \cite{xu2021event}, but with a smaller residual, indicating higher accuracy. For stochastic event-triggered algorithms, our proposed SETC-DNES   outperforms SET-NE \cite{huo2023distributed} in both the number of communication rounds and convergence performance, achieving both a low residual and a reduced communication rate.}
	\item From Fig. \ref{simu} (b), we observe that our proposed algorithms converge faster than C-DNES \cite{chen2022linear}, ET-NE\cite{xu2021event}, and SET-NE\cite{huo2023distributed} when comparing their performances based on the total number of communication bits, demonstrating the effectiveness of our proposed compressed algorithms. {\color{black} Moreover, Fig. \ref{tot_comp} shows that our compressed algorithms achieve reductions of up to a hundredfold in the total transmitted bits compared to ET-NE \cite{xu2021event} and SET-NE \cite{huo2023distributed}, which do not involve compression.} Moreover, SETC-DNES distinguishes itself by achieving the lowest communication cost among all the algorithms, utilizing only $1.24\%$ of the bits required by ET-NE\cite{xu2021event} and {\color{black}$1.42\%$ of those needed by SET-NE\cite{huo2023distributed}} to reach a specified accuracy of the algorithm.

	\end{itemize}

%

\section {CONCLUSION AND FUTURE WORK}
In this paper, we  have studied  the problem of distributed Nash equilibrium seeking over directed networks with constrained communication.  A novel compressed  event-triggered distributed NE seeking approach (ETC-DNES) is proposed to save communication, where a compressed information is  exchanged among neighboring agents only when an  event condition is satisfied. ETC-DNES can guarantee the exact convergence to the NE  using a diminishing triggering threshold sequence, and  linear convergence can be achieved under certain conditions.
To further improve the algorithm  
efficiency, we present a stochastic event-triggered mechanism for compressed NE seeking algorithm (SETC-DNES) and also prove its linear convergent property. Simulation examples demonstrate the effectiveness of the proposed communication-efficient algorithms for directed graphs, with SETC-DNES outperforming ETC-DNES in terms of total number of transmitted bits while achieving a small error. 	Future works may focus on the extensions to  time-varying  networks and  {\color{black}distributed generalized  NE seeking problems.   }


%

\section{APPENDIX}

\bibliographystyle{IEEEtran}
\bibliography{ref.bib}

\begin{thebibliography}{10}
\providecommand{\url}[1]{#1}
\csname url@samestyle\endcsname
\providecommand{\newblock}{\relax}
\providecommand{\bibinfo}[2]{#2}
\providecommand{\BIBentrySTDinterwordspacing}{\spaceskip=0pt\relax}
\providecommand{\BIBentryALTinterwordstretchfactor}{4}
\providecommand{\BIBentryALTinterwordspacing}{\spaceskip=\fontdimen2\font plus
\BIBentryALTinterwordstretchfactor\fontdimen3\font minus
  \fontdimen4\font\relax}
\providecommand{\BIBforeignlanguage}[2]{{%
\expandafter\ifx\csname l@#1\endcsname\relax
\typeout{** WARNING: IEEEtran.bst: No hyphenation pattern has been}%
\typeout{** loaded for the language `#1'. Using the pattern for}%
\typeout{** the default language instead.}%
\else
\language=\csname l@#1\endcsname
\fi
#2}}
\providecommand{\BIBdecl}{\relax}
\BIBdecl

\bibitem{ma2011decentralized}
Z.~Ma, D.~S. Callaway, and I.~A. Hiskens, ``Decentralized charging control of
  large populations of plug-in electric vehicles,'' \emph{IEEE Transactions on
  Control Systems Technology}, vol.~21, no.~1, pp. 67--78, 2011.

\bibitem{grammatico2017dynamic}
S.~Grammatico, ``Dynamic control of agents playing aggregative games with
  coupling constraints,'' \emph{IEEE Transactions on Automatic Control},
  vol.~62, no.~9, pp. 4537--4548, 2017.

\bibitem{saad2012game}
W.~Saad, Z.~Han, H.~V. Poor, and T.~Basar, ``Game-theoretic methods for the
  smart grid: An overview of microgrid systems, demand-side management, and
  smart grid communications,'' \emph{IEEE Signal Processing Magazine}, vol.~29,
  no.~5, pp. 86--105, 2012.

\bibitem{yu2017distributed}
C.-K. Yu, M.~Van Der~Schaar, and A.~H. Sayed, ``Distributed learning for
  stochastic generalized {Nash} equilibrium problems,'' \emph{IEEE Transactions
  on Signal Processing}, vol.~65, no.~15, pp. 3893--3908, 2017.

\bibitem{belgioioso2018projected}
G.~Belgioioso and S.~Grammatico, ``Projected-gradient algorithms for
  generalized equilibrium seeking in aggregative games are preconditioned
  forward-backward methods,'' in \emph{IEEE European Control Conference
  (ECC),}, 2018, pp. 2188--2193.

\bibitem{shamma2005dynamic}
J.~S. Shamma and G.~Arslan, ``Dynamic fictitious play, dynamic gradient play,
  and distributed convergence to {Nash} equilibria,'' \emph{IEEE Transactions
  on Automatic Control}, vol.~50, no.~3, pp. 312--327, 2005.

\bibitem{ghaderi2014opinion}
J.~Ghaderi and R.~Srikant, ``Opinion dynamics in social networks with stubborn
  agents: Equilibrium and convergence rate,'' \emph{Automatica}, vol.~50,
  no.~12, pp. 3209--3215, 2014.

\bibitem{liao2022compressed}
Y.~Liao, Z.~Li, K.~Huang, and S.~Pu, ``A compressed gradient tracking method
  for decentralized optimization with linear convergence,'' \emph{IEEE
  Transactions on Automatic Control}, vol.~67, no.~10, pp. 5622--5629, 2022.

\bibitem{kovalev2021linearly}
D.~Kovalev, A.~Koloskova, M.~Jaggi, P.~Richtarik, and S.~Stich, ``A linearly
  convergent algorithm for decentralized optimization: Sending less bits for
  free!'' in \emph{Proceeding of the 24th International Conference on
  Artificial Intelligence and Statistics}.\hskip 1em plus 0.5em minus
  0.4em\relax PMLR, 2021, pp. 4087--4095.

\bibitem{liu2021linear}
X.~Liu and Y.~Li, ``Linear convergent decentralized optimization with
  compression,'' in \emph{Proceedings of the 9th nternational Conference on
  Learning Representations}, 2021.

\bibitem{beznosikov2020biased}
A.~Beznosikov, S.~Horv{\'a}th, P.~Richt{\'a}rik, and M.~Safaryan, ``On biased
  compression for distributed learning,'' \emph{arXiv preprint
  arXiv:2002.12410}, 2020.

\bibitem{zhang2023innovation}
J.~Zhang, K.~You, and L.~Xie, ``Innovation compression for
  communication-efficient distributed optimization with linear convergence,''
  \emph{IEEE Transactions on Automatic Control}, 2023.

\bibitem{nekouei2016performance}
E.~Nekouei, G.~N. Nair, and T.~Alpcan, ``Performance analysis of gradient-based
  {Nash} seeking algorithms under quantization,'' \emph{IEEE Transactions on
  Automatic Control}, vol.~61, no.~12, pp. 3771--3783, 2016.

\bibitem{chen2022distributed}
Z.~Chen, J.~Ma, S.~Liang, and L.~Li, ``Distributed {N}ash equilibrium seeking
  under quantization communication,'' \emph{Automatica}, vol. 141, p. 110318,
  2022.

\bibitem{chen2022linear}
X.~Chen, Y.~Wu, X.~Yi, M.~Huang, and L.~Shi, ``Linear convergent distributed
  {Nash} equilibrium seeking with compression,'' \emph{arXiv preprint
  arXiv:2211.07849}, 2022.

\bibitem{he2019adaptive}
W.~He, B.~Xu, Q.-L. Han, and F.~Qian, ``Adaptive consensus control of linear
  multiagent systems with dynamic event-triggered strategies,'' \emph{IEEE
  Transactions on Cybernetics}, vol.~50, no.~7, pp. 2996--3008, 2019.

\bibitem{ge2021dynamic}
X.~Ge, S.~Xiao, Q.-L. Han, X.-M. Zhang, and D.~Ding, ``Dynamic event-triggered
  scheduling and platooning control co-design for automated vehicles over
  vehicular ad-hoc networks,'' \emph{IEEE/CAA Journal of Automatica Sinica},
  vol.~9, no.~1, pp. 31--46, 2021.

\bibitem{liu2019distributed}
C.~Liu, H.~Li, Y.~Shi, and D.~Xu, ``Distributed event-triggered gradient method
  for constrained convex minimization,'' \emph{IEEE Transactions on Automatic
  Control}, vol.~65, no.~2, pp. 778--785, 2019.

\bibitem{cao2020decentralized}
X.~Cao and T.~Ba{\c{s}}ar, ``Decentralized online convex optimization with
  event-triggered communications,'' \emph{IEEE Transactions on Signal
  Processing}, vol.~69, pp. 284--299, 2020.

\bibitem{shi2019distributed}
C.-X. Shi and G.-H. Yang, ``Distributed {Nash} equilibrium computation in
  aggregative games: An event-triggered algorithm,'' \emph{Information
  Sciences}, vol. 489, pp. 289--302, 2019.

\bibitem{zhang2021distributed}
K.~Zhang, X.~Fang, D.~Wang, Y.~Lv, and X.~Yu, ``Distributed {Nash} equilibrium
  seeking under event-triggered mechanism,'' \emph{IEEE Transactions on
  Circuits and Systems II: Express Briefs}, vol.~68, no.~11, pp. 3441--3445,
  2021.

\bibitem{xu2021event}
W.~Xu, S.~Yang, S.~Grammatico, and W.~He, ``An event-triggered distributed
  generalized {Nash} equilibrium seeking algorithm,'' in \emph{Proceedings of
  the 60th IEEE Conference on Decision and Control (CDC)}, 2021, pp.
  4301--4306.

\bibitem{han2015stochastic}
D.~Han, Y.~Mo, J.~Wu, S.~Weerakkody, B.~Sinopoli, and L.~Shi, ``Stochastic
  event-triggered sensor schedule for remote state estimation,'' \emph{IEEE
  Transactions on Automatic Control}, vol.~60, no.~10, pp. 2661--2675, 2015.

\bibitem{tsang2022stochastic}
K.~F.~E. Tsang, M.~Huang, L.~Shi, and K.~H. Johansson, ``Stochastic
  event-triggered algorithm for distributed convex optimisation,'' \emph{IEEE
  Transactions on Control of Network Systems}, vol.~10, no.~3, pp. 1374--1386,
  2023.

\bibitem{huo2023distributed}
W.~Huo, K.~F.~E. Tsang, Y.~Yan, K.~H. Johansson, and L.~Shi, ``Distributed
  {{Nash}} equilibrium seeking with stochastic event-triggered mechanism,''
  \emph{Automatica}, 2023.

\bibitem{singh2022sparq}
N.~Singh, D.~Data, J.~George, and S.~Diggavi, ``{SPARQ-SGD}: Event-triggered
  and compressed communication in decentralized optimization,'' \emph{IEEE
  Transactions on Automatic Control}, vol.~68, no.~2, pp. 721--736, 2022.

\bibitem{tatarenko2020geometric1}
T.~Tatarenko, W.~Shi, and A.~Nedi{\'c}, ``Geometric convergence of gradient
  play algorithms for distributed {Nash} equilibrium seeking,'' \emph{IEEE
  Transactions on Automatic Control}, vol.~66, no.~11, pp. 5342--5353, 2020.

\bibitem{sundhar2010distributed}
S.~Sundhar~Ram, A.~Nedi{\'c}, and V.~V. Veeravalli, ``Distributed stochastic
  subgradient projection algorithms for convex optimization,'' \emph{Journal of
  Optimization Theory and Applications}, vol. 147, no.~3, pp. 516--545, 2010.

\bibitem{horn2012matrix}
R.~A. Horn and C.~R. Johnson, \emph{Matrix analysis}.\hskip 1em plus 0.5em
  minus 0.4em\relax Cambridge University Press, 2012.

\bibitem{hayashi2018event}
N.~Hayashi, T.~Sugiura, Y.~Kajiyama, and S.~Takai, ``Event-triggered
  consensus-based optimization algorithm for smooth and strongly convex cost
  functions,'' in \emph{Proceedings of the 57th IEEE Conference on Decision and
  Control (CDC)}, 2018, pp. 2120--2125.

\end{thebibliography}

\end{document}